\tikzset{align at top/.style={baseline=(current bounding box.north)}}
\tikzstyle{every node}=[font=\scriptsize]
\tikzstyle{state} = [draw,fill=white,ellipse,thick,align=center,inner sep=0pt,minimum size=4.5mm]
\tikzstyle{vvert} = [draw,fill=white,ellipse,thick,align=center,inner sep=-2pt,minimum size=8mm]
\tikzstyle{rvert} = [draw,fill=white,rectangle,thick,align=center,inner sep=3pt,minimum size=7mm]
\tikzstyle{dot} = [fill,circle,inner sep=0mm,minimum size=1.25mm,line width=0mm]
\newtheorem{definition}{Definition}
\newtheorem{theorem}{Theorem}
\newtheorem{example}{Example}
\newtheorem{lemma}{Lemma}
\newtheorem{proposition}{Proposition}
\newcommand{\Nat}{\mathbb{N}}
\newcommand{\Reals}{\mathbb{R}}
\newcommand{\M}{\mathbin{\mathbf{M}}} 
\def\Masking{\preceq_{m}}
\def\FMask{f_{\milestones}}
\def\Refuter{\mathsf{R}}
\def\Verifier{\mathsf{V}}
\def\Probabilistic{\mathsf{P}}
\def\ErrorSt{{v_{\text{err}}}}
\def\InitVertex{v_0^\StochG}
\def\InitVertexH{v_0^\StochH}
\def\InitVertexSG{v_0^{\SymbG}}
\def\SigmaF{\Sigma_{\mathcal{F}}}
\def\val{\mathop{\textup{val}}}
\def\support#1{\mathit{Supp}\left(#1\right)}
\def\faults{\mathcal{F}}
\def\RFP{\mathit{RFP}}
\def\Prob#1#2{\mathit{Prob}^{#1, #2}}
\def\couplings#1#2{\mathbb{C}(#1,#2)}
\def\verticesletter{\mathbb{V}}
\def\vertices#1{\verticesletter(#1)}
\def\pr#1#2{{#2}[{#1}]}
\def\RelCoupling{R^{\#}}
\newcommand{\MaskCoup}{\mathbin{\mathbf{M^{\#}}}} 
\def\Expect#1#2{\mathbb{E}^{#1, #2}}
\def\SymbG{\mathcal{SG}}
\def\StochG{\mathcal{G}}
\def\StochH{\mathcal{H}}
\def\post{\mathit{Post}}
\def\pre{\mathit{Pre}}
\def\SymbAFairpre{{\forall\pre^\SymbG_\fairidx}}
\def\SymbEFairpre{{\exists\pre^\SymbG_\fairidx}}
\def\Eq{\mathit{Eq}}
\def\out{\mathit{out}}
\def\Dist{\mathcal{D}}
\def\Dirac{\Delta}
\def\reward{\mathit{r}}
\def\tuple#1{{\langle{#1}\rangle}}
\def\pparenthesis#1{{\llparenthesis #1 \rrparenthesis}}
\def\strat#1{\pi_{#1}}
\def\starredstrat#1{\pi^*_{#1}}
\def\upperbound{\mathbf{U}}
\def\fairidx{\textsl{f}}
\def\memorylessidx{\textsl{M}}
\def\deterministicidx{\textsl{D}}
\def\semimarkovidx{\textsl{S}}
\def\extremeidx{\textsl{X}}
\def\Strategies#1{\Pi_{#1}}
\def\FairStrats#1{\Pi^{\fairidx}_{#1}}
\def\SemiMarkovFairStrats#1{\Pi^{\semimarkovidx\fairidx}_{#1}}
\def\SemiMarkovStrats#1{\Pi^{\semimarkovidx}_{#1}}
\def\XSemiMarkovStrats#1{\Pi^{\extremeidx\semimarkovidx}_{#1}}
\def\DetMemorylessStrats#1{\Pi^{\memorylessidx\deterministicidx}_{#1}}
\def\XDetMemorylessStrats#1{\Pi^{\extremeidx\memorylessidx\deterministicidx}_{#1}}
\def\DetMemorylessFairStrats#1{\Pi^{\memorylessidx\deterministicidx\fairidx}_{#1}}
\newcommand{\milestones}{\mathsf{m}}
\newcommand{\mreward}{\reward_\milestones}
\mathchardef\mhyphen="2D
\newcommand*{\xrightarrowprime}[2][]{\mathrel{{\xrightarrow[#1]{#2}}{}'}}
\def\PRISM{\textsf{PRISM}}
\def\LTL{\textsf{LTL}}
\def\Bellman{\Gamma}
\newcommand{\codett}[1]{\ensuremath{\mathtt{#1}}}
\definecolor{lightblue}{RGB}{220,220,255}
\definecolor{lightred}{RGB}{255,224,224}
\definecolor{lightgreen}{RGB}{224,255,224}
\definecolor{lightyellow}{RGB}{255,255,224}
\definecolor{lightpurple}{RGB}{255,224,255}
\definecolor{darkerred}{RGB}{64,0,0}
\definecolor{darkred}{RGB}{128,0,0}
\definecolor{darkblue}{RGB}{0,0,128}
\definecolor{darkgreen}{RGB}{0,128,0}
\definecolor{darkpurple}{RGB}{128,0,128}
\definecolor{black}{RGB}{0,0,0}
\def\THICKhrulefill{\leavevmode \leaders \hrule height 5pt\hfill \kern \z@}
\renewcommand\paragraph{\@startsection{paragraph}{4}{\z@}%
                       {-5\p@ \@plus -2\p@ \@minus -2\p@}%
                       {-0.5em \@plus -0.22em \@minus -0.1em}%
                       {\normalfont\normalsize\itshape}}
\title{Quantifying Masking Fault-Tolerance\\ via Fair Stochastic Games%
  \thanks{This work was supported by ANPCyT PICT-2017-3894
    (RAFTSys), ANPCyT PICT 2019-3134, SeCyT-UNC 33620180100354CB
    (ARES), and EU Horizon
    2020 MSCA grant agreement 101008233 (MISSION).}
}
\author{Pablo F. Castro
\institute{Departamento de Computaci\'on, FCEFQyN, Universidad Nacional de R\'{\i}o Cuarto, R\'io Cuarto, Argentina} 
\institute{Consejo Nacional de Investigaciones Cient\'ificas y T\'ecnicas (CONICET), Argentina }
\email{pcastro@dc.exa.unrc.edu.ar}
\and
 Pedro R. D'Argenio
\institute{FAMAF, Universidad Nacional de C\'ordoba, C\'ordoba, Argentina} 
\institute{Consejo Nacional de Investigaciones Cient\'ificas y T\'ecnicas (CONICET), Argentina}
\email{pedro.dargenio@unc.edu.ar}
\and
 Ramiro Demasi
\institute{FAMAF, Universidad Nacional de C\'ordoba, C\'ordoba, Argentina} 
\institute{Consejo Nacional de Investigaciones Cient\'ificas y T\'ecnicas (CONICET), Argentina}
\email{rdemasi@unc.edu.ar}
\and
 Luciano Putruele
\institute{Departamento de Computaci\'on, FCEFQyN, Universidad Nacional de R\'{\i}o Cuarto, R\'io Cuarto, Argentina}
\institute{Consejo Nacional de Investigaciones Cient\'ificas y T\'ecnicas (CONICET), Argentina}
\email{lputruele@dc.exa.unrc.edu.ar}
}
\begin{document}

\maketitle

\begin{abstract}
  We introduce a formal notion of masking fault-tolerance between
  probabilistic transition systems using stochastic games. These games
  are inspired in bisimulation games,  but  they also take into account the possible faulty behavior of systems.  
  When no faults are present,  these games boil down to probabilistic bisimulation games.
 Since these games could be infinite, we propose a symbolic way of
  representing them so that they can be solved in polynomial time.
  In particular,  we use this notion of masking to quantify the level of masking
  fault-tolerance exhibited by almost-sure failing systems, i.e.,
  those systems that eventually fail with probability $1$.  The level
  of masking fault-tolerance of almost-sure failing systems can be
  calculated by solving a collection of functional equations.
  We produce this metric in a setting in which one of the player 
  behaves in a strong fair way (mimicking the idea of fair environments).
\end{abstract}

\section{Introduction} \label{sec:intro}

Fault-tolerance~\cite{Gartner99} is an important aspect of critical systems,  in which a fault may lead to important economic, or human life,  losses.  Examples are ubiquitous: banking systems, automotive software, communication protocols, etc.  Fault-tolerant systems  typically use some kind of mechanism based on redundancy such as data replication,  duplicated messages and voting.  However, these techniques do not consistently enhance the ability of systems to effectively tolerate faults as one could expect.  Hence,  quantifying the effectiveness of fault-tolerance mechanisms is an important issue when developing critical software.  Additionally,  in most cases, faults have a probabilistic nature, 
thus any technique designed for measuring system fault-tolerance should be able to cope with stochastic phenomena.    

 
 In this paper we
provide a framework aimed at quantifying the fault-tolerance exhibited by concurrent probabilistic systems.  This encompasses  the probability of occurrence of faults as well as the use
of randomized algorithms.  Particularly,  we focus on the so-called \emph{masking fault-tolerance},   in which both the safety and liveness properties are preserved by the system  under the occurrence of faults~\cite{Gartner99}. Intuitively,   faults are masked in such a way that their occurrence cannot be observed by the users.  This is often acknowledged as the most desirable kind of fault-tolerance.  The aim of this paper is to provide a framework for selecting a fault-tolerance mechanism over others as well as for balancing multiple mechanisms
(e.g.,  to ponder on cost efficient hardware redundancies vs.\ time demanding software artifacts).

	
 In the last years,  significant progress has been made towards 
defining suitable metrics or distances for diverse types of quantitative 
models including real-time systems \cite{HenzingerMP05}, probabilistic 
models \cite{DBLP:conf/ifip2/GiacaloneJS90,DesharnaisGJP04,BreugelW06,DesharnaisLT11,DBLP:conf/birthday/BreugelW14,Bacci0LM17,TangB18,BacciBLMTB19}, 
and metrics for linear and branching systems \cite{AlfaroFS09,ThraneFL10,LarsenFT11,CernyHR12,Henzinger13}. 
Some authors have already pointed out that these metrics can be useful to reason 
about the robustness and correctness of a system, notions related to fault-tolerance. 
Here we follow the ideas introduced in \cite{CastroDDP18b} where  masking fault-tolerance is captured by means of a tailored  bisimulation
game with quantitative objectives.  We extend these ideas to a probabilistic setting and define a probabilistic version of 
this characterization of masking fault-tolerance which, in turn, we use to define a metric to compare the ``degree'' of masking fault tolerance provided by different mechanisms.


More specifically, we characterize probabilistic masking
fault-tolerance via a tailored variant of probabilistic bisimulation
(named \emph{masking simulation}).  Roughly speaking, masking
simulation relates two probabilistic transition systems. One of them
acts as a system specification (i.e., a nominal model), while the
other one can be thought of as a fault-tolerant implementation that
takes into account possible faulty behavior. The existence of a
masking simulation implies that the implementation masks all faults.
This relation admits a simple game characterization via
a Boolean reachability game played on a stochastic game graph.

Since in practice masking
fault tolerance cannot be achieved in full, the reliability of a fault
tolerance mechanism can only be measured quantitatively.  Thus, we
reinterpret the same game with quantitative objectives.
While previously we dealt with a Boolean reachability objective,
here we introduce \emph{milestones} indicating successful progress of
the model and change the objective of the game to be the expected
total collected milestones.  Therefore, we transform the game into an
expected total reward game.  We then take the measure of the
fault-tolerant mechanism to be the solution of this expected total
reward game.

 In order to prove our results we have addressed several technical issues.  First, the games rely on the notion of
couplings between probabilistic distributions and, as a consequence,
the number of vertices of their game graphs is infinite.  To be able to deal with these infinite games, 
we introduce a symbolic representation for them
where couplings are
captured by means of equation systems.  The size of these symbolic
graphs is polynomial in the size of the input systems, which enables us to
solve the (Boolean) simulation game in polynomial time.

Besides, stochastic games with expected total reward objectives are
required to be almost surely stopping~\cite{FilarV96} or, more
generally, almost surely stopping under
fairness~\cite{DBLP:conf/cav/CastroDDP22}.  In our terms, this means that the game needs to be \emph{almost
surely failing under fairness}.
Intuitively, these games model systems that will eventually fail with
probability $1$.  This generalizes the idea that faults with
some positive probability of occurrence will eventually
occur during a long enough system execution.

As our game is of infinite nature,  the results
in~\cite{DBLP:conf/cav/CastroDDP22} cannot be applied directly.
Therefore we devise a finite discretization that allows us to partly
reuse~\cite{DBLP:conf/cav/CastroDDP22} and show that the value of the
game is determined and that it can be computed by solving a collection
of functional equations via an adapted \emph{value iteration}
algorithm \cite{Condon90,Condon92,ChatterjeeH08,KelmendiKKW18}.
%
%
Besides, as the game can only be solved if the game is almost surely
failing under fairness we also provide a polynomial solution to
solve this problem.
%
We remark that both checking almost surely stopping under fairness and
solving the game are calculated through the symbolic graph.

Summarizing,
  we define the notion of probabilistic masking simulation and
  provide its game characterization which we show 
  decidable in polynomial time
  (Sec.~\ref{sec:probMaskDist}).
In Sec.~\ref{sec:probAlmostSure}
  we define an extension of the games by considering rewards and provide
  a payoff function that collects the ``milestones'' achieved
  by the implementation.
  We show that these games are determined provided they are almost-surely
  failing under fairness, and
  give an algorithm to calculate the value of these games.
  We also give a polynomial time algorithm to decide if a game
  is almost-surely failing under fairness.

\section{Preliminaries}\label{sec:background}


A (discrete) \emph{probability distribution} $\mu$ over a denumerable 
set $S$ is a function $\mu: S \rightarrow [0, 1] $  such that 
$\mu(S) \triangleq \sum_{s \in S} \mu(s) = 1$. 
Let $\Dist(S)$ denote the set of all probability distributions on $S$. $\Dirac_s \in \Dist(S)$ denotes the Dirac distribution for $s$, i.e., 
$\Dirac_s(s) =1$ and $\Dirac_s(s') = 0$ whenever $s'\neq s$.
The \textit{support} set of $\mu$ is defined by $\support{\mu} = \{s \mid {\mu (s) > 0}\}$.

A \emph{Probabilistic Transition System} (PTS)~\cite{Segala95} is a structure $A =( S, \Sigma, \rightarrow, s_0 )$ 
where 
\begin{inparaenum}[(i)]
\item%
  $S$ is a denumerable set of \emph{states} containing the
  \emph{initial state} $s_0 \in S$,
\item%
  $\Sigma$ is a set of \emph{actions}, and
\item%
  ${\rightarrow} \subseteq S \times \Sigma \times \Dist(S)$ is the
  \emph{(probabilistic) transition relation}.
\end{inparaenum}
We assume that there is always some transition leaving from every
state.  Here, we only consider finite PTSs, i.e., those in which the
set of states $S$, the set of actions $\Sigma$ and the transition
relation $\rightarrow$ are finite.


A distribution $w \in \Dist(S \times S')$ is a \textit{coupling} for $(\mu, \mu')$, with 
$\mu \in \Dist(S)$ and $\mu' \in \Dist(S')$, if $w(S, \cdot) = \mu'$ and 
$w(\cdot,S') = \mu$.  $\couplings{\mu}{\mu'}$ denotes the set of all couplings for $(\mu, \mu')$.  It is worth
noting that this defines a (two-way transport) polytope (i.e.,  a particular kind of bounded polyhedron).  $\vertices{\couplings{\mu}{\mu'}}$ denotes the set of all vertices of the corresponding polytope.
This set is finite if $S$ and $S'$ are finite.
For $R \subseteq S \times S'$, we say that a coupling $w$ for $(\mu, \mu')$ 
\textit{respects} $R$ if $\support{w} \subseteq R$ (i.e., $w(s, s') > 0 \Rightarrow s~R~s'$).
We define $\RelCoupling \subseteq \Dist(S) \times \Dist(S')$ by $\mu~\RelCoupling~\mu'$ if and only if
there is an $R$-respecting coupling for $(\mu, \mu')$.

A \emph{stochastic game graph} \cite{ChatterjeeH12} is a tuple $\StochG = ( V, E, V_1, V_2, V_\Probabilistic, v_0, \delta  ) $, where $V$ is a set of vertices with $V_1, V_2, V_\Probabilistic \subseteq V$ being a partition of $V$, $v_0\in V$ is the initial vertex,  $E\subseteq V \times V$,
and $\delta : V_\Probabilistic  \rightarrow \Dist(V)$ is a probabilistic transition function such that,  for all $v \in V_\Probabilistic$ and $v' \in V$:  $(v,v') \in E$ iff $v' \in \support{\delta(v)}$.
$V_1$ and $V_2$ are the set of vertices where Players 1 and 2 are
respectively allowed to play.
If $V_\Probabilistic = \emptyset$, then $\StochG$ is called a $2$-player game graph. Moreover, if $V_1 = \emptyset$ or $V_2 = \emptyset$, then $\StochG$ is a \emph{Markov Decision Process} (or MDP). Finally, in case that $V_1= \emptyset$ and $V_2 = \emptyset$, $\StochG$ is a \emph{Markov chain} (or MC). For all states $v \in V$ we define $\post(v) = \{v' \in V \mid (v,v') \in E\}$, the set of successors of $v$. Similarly, we define $\pre(v') = \{v \in V \mid (v,v') \in E \}$ as the set of predecessors of $v'$. 
We assume that $\post(v) \neq \emptyset$ for every $v \in V_1 \cup V_2$. 

Given a game as defined above, a \emph{play} is an infinite sequence $\rho =  \rho_0, \rho_1, \dots$ such that $(\rho_k, \rho_{k+1}) \in E$ for every $k \in \mathbb{N}$.  The set of all plays is denoted by $\Omega$,
and the set of plays starting at vertex $v$ is written $\Omega_v$. A \emph{strategy} (or policy) for Player $i\in\{1,2\}$ is a function $\strat{i}: V^*  \cdot V_i \rightarrow \Dist(V)$ that assigns a probabilistic distribution to each finite sequence of states such that $\support{\strat{i}(\rho\cdot v)}\subseteq\post(v)$ for all $\rho \in V^*$ and $v\in V_i$.   The set of all the strategies for Player $i$ is named $\Strategies{i}$. A strategy $\strat{i}$ is said to be  \emph{pure} (or \emph{deterministic}) if, for every $\rho\in V^*$ and $v\in V_i$, $\strat{i}(\rho \cdot v)$  is a Dirac distribution, and it is called \emph{memoryless} if $\strat{i}(\rho  \cdot v) = \strat{i}(v)$, for every $\rho \in V^*$ and $v\in V_i$. 
Given two strategies $\strat{1} \in \Strategies{1}$, $\strat{2} \in \Strategies{2}$ and a starting state $v$,  the \emph{result} of the game  is a Markov chain, denoted by $\StochG^{\strat{1}, \strat{2}}_v$. 
%
As any Markov chain, $\StochG^{\strat{1}, \strat{2}}_v$ defines a
probability measure $\Prob{\strat{1}}{\strat{2}}_{\StochG,v}$ on the
Borel $\sigma$-algebra generated by the cylinders of $\Omega$.  If
$\mathcal{A}$ is a measurable set in such Borel
$\sigma$-algebra,
$\Prob{\strat{1}}{\strat{2}}_{\StochG,v}(\mathcal{A})$ is
the probability that strategies $\strat{1}$ and $\strat{2}$ generate a
play belonging to $\mathcal{A}$ from state $v$.
It would normally be convenient to use {\LTL} notation to define events. For instance,
$\Diamond V' =  \{ \rho = \rho_0,\rho_1,\dots \in \Omega \mid  \exists i : \rho_i \in V' \}$ defines the event in which some state in $V'$ is reached.
The outcome of the game, denoted by $\out_v(\strat{1}, \strat{2})$ is the set of possible paths
of $\StochG^{\strat{1}, \strat{2}}_{v}$ starting at vertex $v$ (i.e., the possible plays when strategies $\strat{1}$ and $\strat{2}$ are used).  When the initial state $v$ is fixed,  we write
$\out(\strat{1}, \strat{2})$ instead of $\out_{v}(\strat{1}, \strat{2})$. 

A \emph{Boolean objective} for $\StochG$ is a  set $\Phi \subseteq \Omega$.
A play $\rho$ is \emph{winning} for Player $1$ at vertex $v$ if $\rho \in \Phi$, otherwise 
it is winning for Player $2$ 
(i.e., we consider \emph{zero-sum} games).   A strategy $\strat{1}$ is a \emph{sure winning strategy} for Player $1$  from vertex $v$ if, for every strategy $\strat{2}$ for Player $2$, $\out_v(\strat{1}, \strat{2}) \subseteq \Phi$. $\strat{1}$ is said to be \emph{almost-sure winning} if for every strategy $\strat{2}$ for Player $2$, we have $\Prob{\strat{1}}{\strat{2}}_{\StochG,v}(\Phi)=1$.
	Sure  and almost-sure winning strategies for Player $2$ are defined in a similar way.
Reachability games are games with Boolean objectives of the style: 
$\Diamond V'$, for some set $V' \subseteq V$. A standard result is that, if a reachability game has a sure winning strategy, then
it has a pure memoryless sure winning strategy \cite{ChatterjeeH12}. 

A \emph{quantitative objective} is a measurable  function $f: \Omega \rightarrow \mathbb{R}$. Given a measurable function we define $\Expect{\strat{1}}{\strat{2}}_{\StochG,v}[f]$ as the expectation of
function $f$ under probability $\Prob{\strat{1}}{\strat{2}}_{\StochG,v}$. The goal of Player $1$ is to maximize the expected value of $f$, whereas the goal of 
Player $2$ is to minimize it.
Usually, quantitative objective functions are defined via a
\emph{reward function} $r:V \rightarrow \mathbb{R}$.
The value of the game for Player $1$ for strategy $\strat{1}$ at vertex $v$, 
denoted $\val_1(\strat{1})(v)$, is defined as: $\val_1(\strat{1})(v) = \inf_{\strat{2} \in \Strategies{2}} \Expect{\strat{1}}{\strat{2}}_{\StochG,v}[f]$.
Furthermore, the \emph{value of the game} for Player $1$ from vertex $v$ is defined as: $\sup_{\strat{1} \in \Strategies{1}} \val_1(\strat{1})(v)$.
Analogously, the value of the game for a Player $2$ strategy $\strat{2}$ and the value of the game 
for Player $2$ are defined as $\val_2(\strat{2})(v) = \sup_{\strat{1} \in \Strategies{1}}  \Expect{\strat{1}}{\strat{2}}_{\StochG,v} [f]$ 
and $\inf_{\strat{2} \in \Strategies{2}} \val_2(\strat{2})(v)$, respectively. We say that a game is determined if both values are equal, that is,
$\sup_{\strat{1} \in \Strategies{1}} \val_1(\strat{1})(v) = \inf_{\strat{2} \in \Strategies{2}} \val_2(\strat{2})(v)$, for every vertex $v$.

\section{Probabilistic Masking Simulation} \label{sec:probMaskDist}


We start this section by defining a probabilistic extension of the strong masking simulation introduced in
\cite{CastroDDP18b}.  Roughly speaking,  this is a variation of probabilistic bisimulation that takes into account the occurrence of faults (named masking simulation), and captures masking behavior.
This relation serves as a starting point for defining our masking games. We prove that in the Boolean case, our games allows us to decide masking simulation.  Since these games are infinite
we provide a finite symbolic characterization of them.  In Section \ref{sec:probAlmostSure}, we extend these games with quantitative objectives,  which allows us to quantify the level of fault-tolerance offered by an implementation. 

\paragraph*{The relation.}%
In simple terms, a probabilistic masking simulation is a relation
between PTSs that extends probabilistic bisimulation~\cite{journals/iandc/LarsenS91,Segala95}
in order to account for fault masking.  One of the PTSs
acts as the nominal model (or specification),  i.e., it describes the  behavior of the system when
no faults are considered, and the other one represents a possible fault-tolerant   implementation of the specification,  in which the occurrence of faults
are taken into account via a fault tolerance mechanism acting upon them.  

Probabilistic masking simulation allows one to analyze
whether the implementation is able to mask the faults while
preserving the behavior of the specification.  More specifically, for
non-faulty transitions, the relation behaves as probabilistic
bisimulation, which is captured by means of couplings and relations
respecting these couplings. 
The novel part is given by the occurrence of faults: if the
implementation performs a fault, the nominal model matches it by an idle step (this represents internal fault masking mechanisms).

In the following,  given  a set
of actions $\Sigma$, and a (finite) set of \emph{fault labels} $\faults$, with $\faults\cap\Sigma=\emptyset$, we define $\SigmaF
= \Sigma \cup \faults$.  Intuitively, the elements of $\faults$
indicate the occurrence of a fault in a faulty implementation.

\begin{definition} \label{def:maskingRel}
  Let $A =( S, \Sigma, {\rightarrow}, s_0 )$ and
  $A' =( S', \SigmaF, {\rightarrow'}, s_0' )$ be two PTSs representing
  the nominal and the implementation model, respectively.
  $A'$ is \emph{(strong) probabilistic masking fault-tolerant} with
  respect to $A$ iff there exists a relation $\M \subseteq S \times S'$
  such that:
  \begin{inparaenum}[(a)]
  \item%
    $s_0 \M s'_0$, and
  \item%
    for all $s \in S, s' \in S'$ with $s \M s'$ and all $e \in \Sigma$
    and $F \in \faults$ the following holds:
  \end{inparaenum}%
 \begin{enumerate}[(1)]
  \item%
    if $s \xrightarrow{e} \mu$, then $s' \xrightarrowprime{e} \mu'$ and
    $\mu \MaskCoup \mu'$ for some $\mu'$;
  \item%
    if $s' \xrightarrowprime{e} \mu'$, then $s \xrightarrow{e} \mu$ and
    $\mu \MaskCoup \mu'$ for some $\mu$;
  \item%
    if $s' \xrightarrowprime{F} \mu'$, then $\Dirac_s \MaskCoup \mu'$.
  \end{enumerate}
  %
  If such a relation exists we say that $A'$ is a \emph{(strong)
    probabilistic masking fault-tolerant implementation} of $A$, 
    denoted $A \Masking A'$.
\end{definition}

Note that the relation can be encoded in terms of traditional probabilistic
bisimulation as follows: saturate PTSs $A$ and $A'$ by adding
self-loops $s\xrightarrow{F}\Dirac_s$ and
$s'\xrightarrowprime{F}\Dirac_{s'}$, respectively, for every $s\in S$,
$s'\in S'$ and $F\in\faults$.  It follows from the definitions that these two new PTSs are
probabilistic bisimilar iff $A \Masking A'$.   As a consequence, checking $A \Masking A'$ is decidable in
polynomial time.  

%
%
\begin{example}\label{example:memory}
Consider a memory cell storing one bit of information that periodically refreshes its value.  The memory supports both write and read operations,  and when it refreshes, it
performs a read operation and overwrites the memory with the read value.
%
This behaviour is captured by the nominal model of
Fig.~\ref{fig:exam1MemCell:nom} using  {\PRISM} 
notation~\cite{DBLP:conf/cav/KwiatkowskaNP11}.  
In this model,  $\codett{ri}$
and $\codett{wi}$ (for $\codett{i}=\codett{0,1}$) represent the actions of reading and
writing  value $\codett{i}$.  The bit stored in the memory is saved in
variable~\codett{b}.  Action \codett{tick} marks that one time unit has passed
and, with probability \codett{p}, it
enables the refresh action (\codett{rfsh}).  Variable \codett{m}
indicates whether the system is in write/read mode, or producing a
refresh.
\begin{figure}
\begin{minipage}[t]{.4\textwidth}
{\fontsize{9}{9}\selectfont\ttfamily
\begin{tabbing}
x\=xxxxxxx\=xxxxxxxxxxxxx\=x\=xxx\= \kill    
module NOMINAL\\[1ex]
\>b : [0..1] init 0;\\
\>m : [0..1] init 0; \>\>// 0 = normal,\\
\>                   \>\>// 1 = refreshing\\[1ex]
\>[w0]   \>(m=0)          \>\>-> \>(b'= 0);\\
\>[w1]   \>(m=0)          \>\>-> \>(b'= 1);\\
\>[r0]   \>(m=0) \& (b=0) \>\>-> \>true;\\
\>[r1]   \>(m=0) \& (b=1) \>\>-> \>true;\\
\>[tick] \>(m=0)          \>\>-> \>p: (m'= 1) +\\
\>       \>               \>\>   \>(1-p): true;\\
\>[rfsh] \>(m=1)          \>\>-> \>(m'= 0);\\[1ex]
endmodule\\[3.7em]
\end{tabbing}}
\caption{Memory cell: nominal model}\label{fig:exam1MemCell:nom}
\end{minipage}
\hfill
\begin{minipage}[t]{.55\textwidth}
{\fontsize{9}{9}\selectfont\ttfamily
\begin{tabbing}
x\=xxxxxxxx\=xxxxxxxxxxxxx\=xxx\=xxx\= \kill    
module FAULTY\\[1ex]
\>v : [0..3] init 0;\\
\>s : [0..2] init 0; \>\>// 0 = normal, 1 = faulty,\\
\>                   \>\>// 2 = refreshing\\
\>\textcolor{red}{f : [0..1] init 0;} \>\>\textcolor{red}{// fault limiting artifact}\\[1ex]
\>[w0]    \>(s!=2)           \>\>-> \>(v'= 0) \& (s'= 0);\\
\>[w1]    \>(s!=2)           \>\>-> \>(v'= 3) \& (s'= 0);\\
\>[r0]    \>(s!=2) \& (v<=1) \>\>-> \>true;\\
\>[r1]    \>(s!=2) \& (v>=2) \>\>-> \>true;\\
\>[tick]  \>(s!=2)           \>\>-> \>p: (s'= 2) + q: (s'= 1) \\
\>        \>                 \>\>   \>+ (1-p-q): true;\\
\>[rfsh]  \>(s=2)            \>\>-> \>(s'=0)\\
\>        \>                 \>\>   \>\& (v'= (v<=1) ? 0 : 3);\\
\>[fault] \>(s=1) \textcolor{red}{\& (f<1)}   \>\>-> \>(v'= (v<3) ? (v+1) : 2) \\
\>        \>                 \>\>   \>\& (s'= 0) \textcolor{red}{\& (f'= f+1)};\\
\>[fault] \>(s=1) \textcolor{red}{\& (f<1)}    \>\>-> \>(v'= (v>0) ? (v-1) : 1) \\
\>        \>                 \>\>   \>\& (s'= 0) \textcolor{red}{\& (f'= f+1)};\\[1ex]
endmodule\\[-2em]
\end{tabbing}}
\caption{Memory cell: fault-tolerant implementation.}\label{fig:exam1MemCell:ft}
\end{minipage}
\end{figure}

%
%
A potential fault in this scenario occurs when a cell unexpectedly changes its value. 
In practice, the occurrence of such an error 
has a certain probability. A typical technique to deal with this situation is \emph{redundancy}, 
e.g., using three memory bits instead of one. Then, writing operations are performed simultaneously 
on the
three bits while reading returns the value read by majority (or \emph{voting}).
%
Fig.~\ref{fig:exam1MemCell:ft} shows this implementation with
the occurrence of the fault implicitly modeled (ignore, for the time
being, the red part).  Variable \codett{v} counts the votes for the value
1.
%
In addition to enabling the refresh action, a \codett{tick} may also
enable the occurrence of a fault with probability \codett{q}, with
$\codett{p}+\codett{q}\leq 1$.
Variable \codett{s} indicates whether the system is in normal mode
($\codett{s}=0$), in a state where a fault may occur ($\codett{s}=1$),
or producing a refresh ($\codett{s}=2$).
%
The red coloured text in Fig.~\ref{fig:exam1MemCell:ft} is an artifact to limit the number
of faults to $1$.  Under this condition,
relation
$\M = $ $\{{\langle(b,m),(v,s,f)\rangle} \mid {{2b\leq v\leq 2b{+}1} \wedge (m=1 \Leftrightarrow s=2)}\}$
is a probabilistic masking simulation ($b$, $m$, $v$, $s$, and
$f$ represent the values of variables \codett{b}, \codett{m},
\codett{v}, \codett{s}, and \codett{f}, respectively.)
It should be evident that,  when the red coloured text is removed,  
\codett{FAULTY} is not a masking fault-tolerant
implementation of \codett{NOMINAL}. 
\end{example}

\paragraph*{A characterization in terms of  stochastic games.}


We define a stochastic masking simulation game for any given nominal model 
$A = ( S, \Sigma, {\rightarrow}, s_0 )$ and  implementation model $A' = ( S', \Sigma_{\mathcal{F}}, {\rightarrow'}, s'_0 )$. 
The game is similar to a bisimulation game \cite{Stirling98}, and it is played by two players, 
named for convenience the Refuter ($\Refuter$) and the Verifier ($\Verifier$). The Verifier 
wants to prove that $s \in S$ and $s' \in S'$ are probabilistic masking similar, 
and the Refuter intends to disprove that.
The game starts from the pair of states $(s,s')$ and the following steps are repeated:
\begin{enumerate}
\item[1)]
  $\Refuter$ chooses either a transition $s \xrightarrow{a} \mu$ from
  the nominal model or a transition $s' \xrightarrowprime{a} \mu'$ from the
  implementation;
\item[2a)]
  If $a \notin \faults$, $\Verifier$ chooses a transition matching
  action $a$ from the opposite model, i.e., a transition $s'
  \xrightarrowprime{a} \mu'$ if $\Refuter$'s choice was from the nominal model,
  or a transition $s \xrightarrow{a} \mu$ otherwise.  In addition,
  $\Verifier$ chooses a coupling $w$ for $(\mu, \mu')$;
\item[2b)]
  If $a \in \faults$, $\Verifier$ can only select the Dirac
  distribution $\Dirac_{s}$ and the only possible coupling $w$ for
  $(\Dirac_{s}, \mu')$;
\item[3)]
  The successor pair of states $(t, t')$ is chosen probabilistically
  according to $w$.
\end{enumerate}
If the play continues forever, then the Verifier wins; otherwise, the
Refuter wins. (Notice, in particular, that the Verifier loses if she cannot match a transition label, since choosing an arbitrary coupling is always possible.)  Step 2b is the only one that seems to differ from the
usual bisimulation game.  This is needed because of the asymmetry produced by the
transitions labeled with  faults. Intuitively,  if the Refuter
chooses to play a fault in the implementation, then the Verifier ought to
mask the fault,  thus she cannot freely move in the
nominal model.  Summing up,  the probabilistic step of a fault can only be
matched by a Dirac distribution on the corresponding state of the specification.

In the following we define the stochastic masking game graph that
formalizes this idea.
For this, define $\Sigma^i = \{ e^i \mid e \in \Sigma\}$
containing all elements of $\Sigma$ indexed with superscript $i$.

\begin{definition} \label{def:strongMaskingGameGraphi}
  Let $A =( S, \Sigma, {\rightarrow}, s_0 )$ and
  $A' = ( S', \Sigma_\faults , {\rightarrow'}, s'_0 )$ be two PTSs.
  The 2-player \emph{stochastic masking game graph}
  $\StochG_{A,A'} = (V^\StochG, E^\StochG, V^\StochG_\Refuter, V^\StochG_\Verifier, V^\StochG_\Probabilistic, \InitVertex, \delta^\StochG)$,
   is defined as follows:
%
  \begin{align*}
    V^\StochG = \
    & V^\StochG_\Refuter \cup V^\StochG_\Verifier \cup V^\StochG_\Probabilistic, \text{where: }\\
    V^\StochG_\Refuter = \
    & \{ (s, \mhyphen, s', \mhyphen, \mhyphen, \mhyphen, \Refuter) \mid
          s \in S \wedge s' \in S' \} \cup
      \{\ErrorSt\}\\
    V^\StochG_\Verifier = \
    & \{ (s, \sigma^1, s', \mu, \mhyphen, \mhyphen, \Verifier) \mid
         s \in S \wedge s' \in S'
         \wedge \sigma \in \Sigma
         \wedge s\xrightarrow{\sigma} \mu \} \cup {} \\
    & \{ (s, \sigma^2, s', \mhyphen, \mu', \mhyphen, \Verifier) \mid
         s \in S \wedge s' \in S'
         \wedge \sigma \in \SigmaF
         \wedge s'\xrightarrowprime{\sigma}\mu'\}\\
    V^\StochG_\Probabilistic = \
    & \{ (s, \mhyphen, s', \mu, \mu', w, \Probabilistic) \mid
         s \in S \wedge s' \in S' 
         \wedge w \in \couplings{\mu}{\mu'} \wedge {}\\
         & \phantom{\{ (s, \mhyphen, s', \mu, \mu', w, \Probabilistic) \mid {}}
         \exists \sigma{\in}\SigmaF :
         {({s\xrightarrow{\sigma} \mu} \vee (\sigma{\in}\faults \wedge \mu=\Dirac_s))}
         \wedge {s'\xrightarrowprime{\sigma}\mu'}\}\\
    \InitVertex = \
    & ( s_0, \mhyphen, s'_0, \mhyphen, \mhyphen, \mhyphen, \Refuter )
    \text{ \ (the Refuter starts playing)}\\
    & \hspace{-2.6em}
    \delta^\StochG : V^\StochG_\Probabilistic \rightarrow \Dist(V^\StochG_\Refuter),
      \text{ defined by } 
      \delta^\StochG((s, \mhyphen, s', \mu, \mu', w, \Probabilistic))((t, \mhyphen, t', \mhyphen, \mhyphen, \mhyphen, \Refuter)) = w(t,t')
      \text{,}
  \end{align*}
%
  where ``$\mhyphen$'' fills an unused place,
  and $E^\StochG$ is the minimal set satisfying the following rules:
%
  \begin{align*}
    s \xrightarrow{\sigma} \mu
    & \Rightarrow \tuple{(s, \mhyphen, s', \mhyphen, \mhyphen, \mhyphen, \Refuter), (s, \sigma^{1}, s', \mu, \mhyphen, \mhyphen, \Verifier)}\in E^\StochG
    \tag{1$_1$}\label{play:r:1}\\
    s' \xrightarrowprime{\sigma} \mu'
    & \Rightarrow \tuple{(s, \mhyphen, s', \mhyphen, \mhyphen, \mhyphen, \Refuter),(s, \sigma^{2}, s', \mhyphen, \mu', \mhyphen, \Verifier)}\in E^\StochG
    \tag{1$_2$}\label{play:r:2}\\
    {s' \xrightarrowprime{\sigma} \mu'} \wedge {w \in \couplings{\mu}{\mu'}}
    & \Rightarrow \tuple{(s, \sigma^1, s', \mu, \mhyphen, \mhyphen, \Verifier),(s, \mhyphen, s', \mu, \mu', w, \Probabilistic)}\in E^\StochG
    \tag{2a$_1$}\label{play:v:1}\\
    \!\!\!{\sigma \notin \faults} \wedge {s \xrightarrow{\sigma} \mu} \wedge {w \in \couplings{\mu}{\mu'}}
    & \Rightarrow \tuple{(s, \sigma^2, s', \mhyphen, \mu', \mhyphen, \Verifier), (s, \mhyphen, s', \mu, \mu', w, \Probabilistic)}\in E^\StochG
    \tag{2a$_2$}\label{play:v:2a}\\
    {F \in \faults} \wedge {w \in \couplings{\Dirac_s}{\mu'}}
    & \Rightarrow \tuple{(s, F^2, s', \mhyphen, \mu', \mhyphen, \Verifier), (s, \mhyphen, s', \Dirac_s, \mu', w, \Probabilistic)}\in E^\StochG
    \tag{2b}\label{play:v:2b}\\
    (s, \mhyphen, s', \mu, \mu', w, \Probabilistic) \in V^\StochG_\Probabilistic \wedge {} & 
    (t,t') \in \support{w} 
    \Rightarrow \tuple{(s, \mhyphen, s', \mu, \mu', w, \Probabilistic), (t, \mhyphen, t', \mhyphen, \mhyphen, \mhyphen, \Refuter)}\in E^\StochG
    \tag{3}\label{play:p}\\
    {v\in (V^\StochG_\Verifier{\cup}\{\ErrorSt\})} & {} \wedge (\nexists v' \neq\ErrorSt : {\tuple{v,v'}\in E^\StochG})
    \Rightarrow  \tuple{v,\ErrorSt}\in E^\StochG
    \tag{err}\label{play:err}
  \end{align*}
  %
\end{definition}

Some words about this definition are useful, it mainly follows the idea of the game previously described.   A
round of the game starts in the Refuter's state $\InitVertex$.  Notice
that, at this point, only the current states of the nominal and
implementation models are relevant (all other information is not yet defined in this round and hence marked with
``$\mhyphen$'').  Step 1 of the game is encoded in rules
(\ref{play:r:1}) and (\ref{play:r:2}), where the Refuter chooses a
transition, thus defining the action and distribution that need to be
matched,  this moves the game to a Verifier's state.  A Verifier's state
in $V^\StochG_\Verifier$ is a tuple containing which action
and distribution need to be matched, and which model the Refuter has
played.  Step 2a of the game is given by rules (\ref{play:v:1}) and
(\ref{play:v:2a}) in which the Verifier chooses a matching move from
the opposite model (hence defining the other distribution) and an
appropriate coupling,  moving to a probabilistic state.  Step 2b
of the game is encoded in rule (\ref{play:v:2b}).  Here the Verifier
has no choice since she is obliged to choose the Dirac distribution
$\Dirac_s$ and the only available coupling in
$\couplings{\Dirac_s}{\mu'}$.  A probabilistic state in
$V^\StochG_\Probabilistic$ contains the information needed to probabilistically
resolve the next step through function $\delta^\StochG$ (rule
(\ref{play:p})).
Finally,  rule (\ref{play:err}) states that, if a player has no move,  then she reaches an error state ($\ErrorSt$).  Note that this can only happen in a Verifier's state or in
$\ErrorSt$.

The notion of probabilistic masking simulation can be captured by the
corresponding stochastic masking game with the appropriate Boolean
objective.

\begin{theorem} \label{thm:wingameStrat}
  Let $A= ( S, \Sigma, {\rightarrow}, s_0 )$ and $A'=( S', \SigmaF, {\rightarrow'}, s_0' )$ be two PTSs.
  Then, $A \Masking A'$ iff the Verifier has a sure (or almost-sure) winning strategy for the stochastic masking game graph 
  $\mathcal{G}_{A,A'}$ with the Boolean objective $\neg \Diamond \ErrorSt$.
\end{theorem}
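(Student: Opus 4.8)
The plan is to prove the two directions separately, observing first that for this objective the Verifier has an almost-sure winning strategy iff she has a sure one, so that the parenthetical ``(or almost-sure)'' is justified and it suffices to treat the sure case. Indeed, $\ErrorSt$ is entered only through rule (\ref{play:err}), and an outcome path can reach $\ErrorSt$ only via a finite prefix, every step of which has positive probability under the (possibly randomized) strategies; hence if some outcome path of a Verifier strategy reached $\ErrorSt$, that finite prefix alone would force $\Prob{\strat{\Verifier}}{\strat{\Refuter}}_{\StochG,\InitVertex}(\Diamond\ErrorSt)>0$. Thus any almost-sure winning strategy is automatically sure winning. Consequently I establish: $A\Masking A'$ implies a sure winning strategy for the Verifier, a sure winning strategy is trivially almost-sure winning, and a (sure) winning strategy implies $A\Masking A'$, closing the loop.

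For the forward direction, assume a masking simulation $\M$ exists, and define a pure memoryless Verifier strategy maintaining the invariant that whenever play is at a Refuter vertex $(s,\mhyphen,s',\mhyphen,\mhyphen,\mhyphen,\Refuter)$ one has $s\M s'$. The invariant holds initially by clause (a). For preservation I case-split on the Refuter's move. If the Refuter plays $s\xrightarrow{\sigma}\mu$ (rule (\ref{play:r:1})), then $\sigma\in\Sigma$ and clause (1) supplies $\mu'$ with $s'\xrightarrowprime{\sigma}\mu'$ and an $\M$-respecting coupling $w\in\couplings{\mu}{\mu'}$, which the Verifier plays via rule (\ref{play:v:1}). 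If the Refuter plays $s'\xrightarrowprime{\sigma}\mu'$ with $\sigma\in\Sigma$ (rule (\ref{play:r:2})), clause (2) supplies a matching $\mu$ and $\M$-respecting $w$, played via rule (\ref{play:v:2a}). If $\sigma=F\in\faults$, the Verifier is forced by rule (\ref{play:v:2b}) to answer with $\Dirac_s$, and clause (3) guarantees an $\M$-respecting $w\in\couplings{\Dirac_s}{\mu'}$, so the forced move is legal. In each case the Verifier reaches a probabilistic vertex carrying an $\M$-respecting coupling, so by rule (\ref{play:p}) every successor Refuter vertex $(t,\mhyphen,t',\dots)$ satisfies $t\M t'$, re-establishing the invariant. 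Since the Verifier always has a legal response she never triggers rule (\ref{play:err}), the play is infinite and avoids $\ErrorSt$ on every branch, and the strategy is sure winning.

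For the reverse direction, assume the Verifier wins; by the above equivalence and the standard fact that such games admit pure memoryless winning strategies, fix a pure memoryless sure winning strategy and let $W$ be the Verifier's winning region. Define $\M=\{(s,s')\mid (s,\mhyphen,s',\mhyphen,\mhyphen,\mhyphen,\Refuter)\in W\}$. Clause (a) is immediate since $\InitVertex\in W$. For clause (b) I use the standard winning-region characterizations: a Refuter vertex lies in $W$ iff all its successors do, a Verifier vertex lies in $W$ iff some successor does (and the fixed strategy selects such a successor), and a probabilistic vertex lies in $W$ iff every vertex in its coupling's support does. To verify clause (1), given $s\xrightarrow{\sigma}\mu$ the successor Verifier vertex $(s,\sigma^1,s',\mu,\mhyphen,\mhyphen,\Verifier)$ lies in $W$ (all successors of the winning Refuter vertex do), so the strategy selects a probabilistic successor $(s,\mhyphen,s',\mu,\mu',w,\Probabilistic)\in W$ with $s'\xrightarrowprime{\sigma}\mu'$; since this probabilistic vertex is in $W$, every pair in $\support{w}$ yields a Refuter vertex in $W$, i.e.\ $w$ respects $\M$, giving $\mu\MaskCoup\mu'$. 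Clauses (2) and (3) are analogous via rules (\ref{play:v:2a}) and (\ref{play:v:2b}); in the fault case the only available probabilistic successor uses $\Dirac_s$, yielding an $\M$-respecting coupling for $(\Dirac_s,\mu')$ and hence $\Dirac_s\MaskCoup\mu'$. Thus $\M$ is a masking simulation and $A\Masking A'$.

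The main obstacle is getting the three quantifier patterns exactly right at the three vertex types in the reverse direction: universal over successors at Refuter and probabilistic vertices versus existential at Verifier vertices, and in particular justifying that ``$w$ respects $\M$'' is precisely ``every support vertex of the probabilistic node lies in $W$''. A secondary point needing care is the sure/almost-sure collapse argued in the first paragraph, which hinges on $\ErrorSt$ being entered only by the deterministic rule (\ref{play:err}) and on outcome-path reachability of $\ErrorSt$ being witnessed by a positive-probability finite prefix.
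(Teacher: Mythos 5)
Your proposal is correct and follows the canonical route one would expect for this theorem: building a (memoryless) Verifier strategy from the relation $\M$ by preserving the invariant $s \M s'$ at Refuter vertices, and conversely reading off $\M$ from the Verifier's winning region using the existential/universal successor characterizations at Verifier, Refuter and probabilistic vertices. Your justification of the sure/almost-sure collapse (any outcome path hitting $\ErrorSt$ does so along a positive-probability finite prefix, so almost-sure safety forces sure safety) is exactly the right argument for the parenthetical claim in the statement.
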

Note that this theorem holds for  both sure and almost-sure strategies of the Verifier,  
this follows from the fact that for stochastic reachability objectives the two kinds of strategies are equivalent.

\definecolor{shadecolor}{RGB}{190,220,255}
\newcommand{\bkgrblue}[1]{\makebox[0pt][l]{\hspace{-.4ex}\raisebox{0ex}[0ex][0ex]{\color{shadecolor}\rule[-.8ex]{#1}{2.9ex}}}}

\begin{figure}[t]
\begin{minipage}[t]{.47\textwidth}
\begin{center}
  \scalebox{1}{
    \begin{tikzpicture}[on grid,auto,align at top,,
        hvh path/.style={to path={-- ++(#1,0) |- (\tikztotarget)},rounded corners}]

      \node (ar) {$\left((0,0),\mhyphen,(0,0,\textcolor{red}{0}),\mhyphen,\mhyphen,\bkgrblue{1.8ex}\mhyphen,\Refuter\right)$};
      \node (av) [below=0.75 of ar]  {$\left((0,0),\codett{tick}^1,(0,0,\textcolor{red}{0}),\mu,\mhyphen,\bkgrblue{1.8ex}\mhyphen,\Verifier\right)$};
      \node (ap) [below=0.8 of av]  {$\left((0,0),\codett{tick}^1,(0,0,\textcolor{red}{0}),\mu,\mu',\bkgrblue{3.7ex}w_0,\Probabilistic\right)$};
      \node (br) [below right=1.3 and 1.3 of ap] {$\left((0,0),\mhyphen,(0,1,\textcolor{red}{0}),\mhyphen,\mhyphen,\bkgrblue{1.8ex}\mhyphen,\Refuter\right)$};
      \node (brign) [below left=1.3 and 2.3 of ap] {$\left((0,1),\mhyphen,(0,2,\textcolor{red}{0}),\mhyphen,\mhyphen,\bkgrblue{1.8ex}\mhyphen,\Refuter\right)$};
      \node (bv) [below left=.7 and 1.3 of br] {$\left((0,0),\codett{fault}^2,(0,1,\textcolor{red}{0}),\mhyphen,\Dirac_{(1,0,\textcolor{red}{1})},\bkgrblue{1.8ex}\mhyphen,\Verifier\right)$};
      \node (bp) [below =.9 of bv] {$\left((0,0),\codett{fault}^2,(0,1,\textcolor{red}{0}),\Dirac_{(0,0)},\Dirac_{(1,0,\textcolor{red}{1})},\bkgrblue{3.7ex}w_1,\Probabilistic\right)$};
      \node (cr) [below =.8 of bp] {$\left((0,0),\mhyphen,(1,0,\textcolor{red}{1}),\mhyphen,\mhyphen,\bkgrblue{1.8ex}\mhyphen,\Refuter\right)$};
      \node (cv) [below=.75 of cr]  {$\left((0,0),\codett{tick}^1,(1,0,\textcolor{red}{1}),\mu,\mhyphen,\bkgrblue{1.8ex}\mhyphen,\Verifier\right)$};
      \node (cp) [below=.8 of cv]  {$\left((0,0),\codett{tick}^1,(1,0,\textcolor{red}{1}),\mu,\mu'',\bkgrblue{3.7ex}w_2,\Probabilistic\right)$};
      \node (dr) [below right=1.3 and 1.3 of cp] {$\left((0,0),\mhyphen,(1,1,\textcolor{red}{1}),\mhyphen,\mhyphen,\bkgrblue{1.8ex}\mhyphen,\Refuter\right)$};
      \node (drign) [below left=1.3 and 2.3 of cp] {$\left((0,1),\mhyphen,(1,2,\textcolor{red}{1}),\mhyphen,\mhyphen,\bkgrblue{1.8ex}\mhyphen,\Refuter\right)$};
      \node (dv) [below left=.7 and 1.3 of dr] {$\left((0,0),\codett{fault}^2,(1,1),\mhyphen,\Dirac_{(2,0)},\bkgrblue{1.8ex}\mhyphen,\Verifier\right)$};
      \node (dp) [below =.9 of dv] {$\left((0,0),\codett{fault}^2,(2,1),\Dirac_{(0,0)},\Dirac_{(2,0)},\bkgrblue{3.7ex}w_3,\Probabilistic\right)$};
      \node (er) [below =.8 of dp] {$\left((0,0),\mhyphen,(2,0),\mhyphen,\mhyphen,\bkgrblue{1.8ex}\mhyphen,\Refuter\right)$};
      \node (ev) [below =.8 of er] {$\left((0,0),\codett{r0}^1,(2,0),\Dirac_{(0,0)},\mhyphen,\bkgrblue{1.8ex}\mhyphen,\Verifier\right)$};
      \node (err) [below =.75 of ev] {$\ErrorSt$};

      
      
      \path[-{Latex[length=1.35mm,width=0.9mm]},line width=0.5pt]
      (ar) edge[] (av)
      (av) edge[] (ap)
      (ap) edge[] node[above,pos=0.6] {$\codett{q}$} (br)
      (ap) edge[] node[above,pos=0.6] {$\codett{p}$} (brign)
      ($ (ap.south east) - (10mm,0) $) edge[looseness=2.4,out=325,in=340] node[below,pos=0.2,yshift=1mm] {$1{-}\codett{p}{-}\codett{q}$} ($ (ar.east) - (0,1mm) $)
      (br) edge[] (bv)
      (bv) edge[] (bp)
      (bp) edge[] (cr)
      (cr) edge[] (cv)
      (cv) edge[] (cp)
      (cp) edge[] node[above,pos=0.6] {$\codett{q}$} (dr)
      (cp) edge[] node[above,pos=0.6] {$\codett{p}$} (drign)
      ($ (cp.south east) - (10mm,0) $) edge[looseness=2.4,out=325,in=340] node[below,pos=0.2,yshift=1mm] {$1{-}\codett{p}{-}\codett{q}$} ($ (cr.east) - (0,1mm) $)
      (dr) edge[] (dv)
      (dv) edge[] (dp)
      (dp) edge[] (er)
      (er) edge[] (ev)
      (ev) edge[] (err)
      (err) edge[loop,out=0,in=45,looseness=4] (err)
      ;

      \scoped[on background layer]{
        \node (fznw) [above left=.4 and 2.8 of dv] {};
        \node (fzse) [below right=3.9 and 5.6 of fznw] {};

        \path[fill=red!10]
        (fznw) rectangle (fzse)
        ;
      }
      \scoped[on background layer]{
        \node (fzpnw) [below left=.37 and 1.42 of ap] {};
        \node (fzpse) [below right=.31 and .26 of fzpnw] {};
        \node (fzqnw) [below right=.37 and .6 of ap] {};
        \node (fzqse) [below right=.31 and .26 of fzqnw] {};
        \node (fzrnw) [below right=.53 and 1.4 of ap] {};
        \node (fzrse) [below right=.34 and .97 of fzrnw] {};

        \path[fill=shadecolor]
        (fzpnw) rectangle (fzpse)
        (fzqnw) rectangle (fzqse)
        (fzrnw) rectangle (fzrse)
        ;
      }
      \scoped[on background layer]{
        \node (fzpnw) [below left=.37 and 1.42 of cp] {};
        \node (fzpse) [below right=.31 and .26 of fzpnw] {};
        \node (fzqnw) [below right=.37 and .6 of cp] {};
        \node (fzqse) [below right=.31 and .26 of fzqnw] {};
        \node (fzrnw) [below right=.53 and 1.42 of cp] {};
        \node (fzrse) [below right=.34 and .97 of fzrnw] {};

        \path[fill=shadecolor]
        (fzpnw) rectangle (fzpse)
        (fzqnw) rectangle (fzqse)
        (fzrnw) rectangle (fzrse)
        ;
      }
    \end{tikzpicture}
  }
  \end{center}
  \end{minipage}
  \hspace{1.9em}
  \begin{minipage}[t]{.472\textwidth}
  \small\vspace{-1.4em}
  \begin{align*}
    \mu   &= \codett{p}\cdot(0,1)+(1{-}\codett{p})\cdot(0,0)\\[1ex]
    \mu'  &= \begin{cases}\codett{p}\cdot(0,2,\textcolor{red}{0})+\codett{q}\cdot(0,1,\textcolor{red}{0})+{}\\(1{-}\codett{p}{-}\codett{q})\cdot(0,0,\textcolor{red}{0})\end{cases}\\[1ex]
    \mu'' &= \begin{cases}\codett{p}\cdot(1,2,\textcolor{red}{1})+\codett{q}\cdot(1,1,\textcolor{red}{1})+{}\\(1{-}\codett{p}{-}\codett{q})\cdot(1,0,\textcolor{red}{1})\end{cases}\\[1ex]
  w_0   &= \begin{cases}\codett{p}\cdot((0,1),(0,2,\textcolor{red}{0}))+\codett{q}\cdot((0,0),(0,1,\textcolor{red}{0}))+{}\\(1{-}\codett{p}{-}\codett{q})\cdot((0,0),(0,0,\textcolor{red}{0}))\end{cases}\\[1ex]
    w_1   &= \Dirac_{((0,0),(1,0,\textcolor{red}{1}))}\\[1ex]
    w_2   &= \begin{cases}\codett{p}\cdot((0,1),(1,2,\textcolor{red}{1}))+\codett{q}\cdot((0,0),(1,1,\textcolor{red}{1}))+{}\\(1{-}\codett{p}{-}\codett{q})\cdot((0,0),(1,0,\textcolor{red}{1}))\end{cases}\\[1ex]
    w_3   &= \Dirac_{((0,0),(2,0))}
  \end{align*}
  \end{minipage}
  
  \caption{A fragment of a masking game graph}\label{fig:masking:game:graph}
\end{figure}

\begin{example}\label{ex:running}
  Consider the graph in Fig.~\ref{fig:masking:game:graph} (ignoring the blue shading for now). 
  It represents a fragment of the masking game graph between
  \codett{NOMINAL} and \codett{FAULTY} of Example~\ref{example:memory}.
  The vertices represent the variable values in the following order:
  $((\codett{b},\codett{m}),\_,(\codett{v},\codett{s},\textcolor{red}{\codett{f}}),\_,\_,\_,\_)$.
  First,  consider the graph disregarding the red highlighted numbers.  For example,
  $\left((0,0),\mhyphen,(0,0,\textcolor{red}{0}),\mhyphen,\mhyphen,\mhyphen,\Refuter\right)$
  should be read as
  $\left((0,0),\mhyphen,(0,0),\mhyphen,\mhyphen,\mhyphen,\Refuter\right)$.
  In this case we obtain the masking game graph when the red part in
  \codett{FAULTY} is removed.
  Notice that, in the majority of the vertices, many outgoing edges are
  omitted.  In particular, the Verifier vertex
  $\left((0,0),\codett{tick}^1,(0,0),\mu,\mhyphen,\mhyphen,\Verifier\right)$
  has infinitely many
  outgoing edges leading to probabilistic vertices of the  form
  $\left((0,0),\codett{tick}^1,(0,0),\mu,\mu',w,\Probabilistic\right)$,
  where $w$ is a coupling for $(\mu,\mu')$.
  In the graph, we have chosen to distinguish coupling $w_0$ which is
  optimal for the Verifier (similarly later for $w_2$).
  We highlighted the path leading to error state $\ErrorSt$.
  Notice that this occurs as a consequence of the Refuter choosing to
  do a second \codett{fault} in vertex
  $\left((0,0),\mhyphen,(1,1),\mhyphen,\mhyphen,\mhyphen,\Refuter\right)$
  steering the game to the red shadowed part of the graph.  Later, the
  Refuter chooses to read 0 in the \codett{NOMINAL} model (at vertex
  $\left((0,0),\mhyphen,(2,0),\mhyphen,\mhyphen,\mhyphen,\Refuter\right)$)
  which the Verifier cannot match.

  Now,  consider the masking game graph between \codett{NOMINAL} and
  fault-limited \codett{FAULTY} model (i.e.,  take now into account the
  red part).  This graph includes the red values corresponding to
  variable $\textcolor{red}{\codett{f}}$.  Notice that here, the
  Refuter cannot 
  produce a \codett{fault} transition from
  vertex
  $\left((0,0),\mhyphen,(1,1,\textcolor{red}{\codett{1}}),\mhyphen,\mhyphen,\mhyphen,\Refuter\right)$.
  Thus,  in this case, the Verifier manages to avoid reaching the error state
  $\ErrorSt$.
\end{example}
\paragraph*{A symbolic game graph.}

The graph for a stochastic masking game could be infinite
since each probabilistic node includes a coupling between the two
contending distributions, and there can be uncountably many of them.
In the following, we introduce a finite description of stochastic
masking games through a symbolic representation that omits explicit
reference to couplings.
%
The definition of the symbolic game graph is twofold.
The first part captures the non-stochastic behaviour of the
game by removing the stochastic choice ($\delta^\StochG$) of the
graph as well as the couplings on the vertices.  The
second part appends an equation system to each probabilistic vertex
whose solution space is the polytope defined by the set of all
couplings for the contending distributions.

\begin{definition} \label{def:symbolicGameGraph}
  Let $A = ( S, \Sigma, {\rightarrow}, s_0 )$
  and $A' = ( S', \Sigma_\faults, {\rightarrow'}, s'_0 )$
  be two PTSs.
  The \emph{symbolic game graph} for the stochastic masking game
  $\mathcal{G}_{A,A'}$ is defined by 
  $\SymbG_{A,A'} = ( V^{\SymbG}, E^{\SymbG}, V^{\SymbG}_\Refuter, V^{\SymbG}_\Verifier, V^{\SymbG}_\Probabilistic, \InitVertexSG )$,
  where:
  \begin{align*}
    V^\SymbG = \
    & V^\SymbG_\Refuter \cup V^\SymbG_\Verifier \cup V^\SymbG_\Probabilistic, \text{ where: }\\
    V^\SymbG_\Refuter = \
    & \{ (s, \mhyphen, s', \mhyphen, \mhyphen, \Refuter) \mid
          s \in S \wedge s' \in S' \} \cup
      \{\ErrorSt\}\\
    V^\SymbG_\Verifier = \
    & \{ (s, \sigma^1, s', \mu, \mhyphen, \Verifier) \mid
         s \in S \wedge s' \in S'
         \wedge \sigma \in \Sigma
         \wedge s \xrightarrow{\sigma} \mu \} \cup {} \\
    & \{ (s, \sigma^2, s', \mhyphen, \mu', \Verifier) \mid
         s \in S \wedge s' \in S'
         \wedge \sigma \in \SigmaF
          \wedge s' \xrightarrowprime{\sigma} \mu' \} \\
    V^\SymbG_\Probabilistic = \
    & \{ (s, \mhyphen, s', \mu, \mu', \Probabilistic) \mid
         s \in S \wedge s' \in S' \wedge  
         \exists \sigma{\in}\SigmaF : (s\xrightarrow{\sigma}\mu \wedge (\sigma{\in}\faults \vee \mu=\Dirac_s)) \wedge s'\xrightarrowprime{\sigma}\mu'\}\\
    \InitVertexSG = \
    & ( s_0, \mhyphen, s'_0, \mhyphen, \mhyphen, \Refuter ),
  \end{align*}
%
  and $E^\SymbG$ is the minimal set satisfying the following rules:
%
  \begin{align*}
    s \xrightarrow{\sigma} \mu
    & \Rightarrow \tuple{(s, \mhyphen, s', \mhyphen, \mhyphen, \Refuter), (s, \sigma^{1}, s', \mu, \mhyphen, \Verifier)}\in E^\SymbG \\
    s' \xrightarrowprime{\sigma} \mu'
    & \Rightarrow \tuple{(s, \mhyphen, s', \mhyphen, \mhyphen, \Refuter),(s, \sigma^{2}, s', \mhyphen, \mu', \Verifier)}\in E^\SymbG \\
    {s' \xrightarrowprime{\sigma} \mu'}
    & \Rightarrow \tuple{(s, \sigma^1, s', \mu, \mhyphen, \Verifier),(s, \mhyphen, s', \mu, \mu', \Probabilistic)}\in E^\SymbG \\
    {\sigma \notin \faults} \wedge {s \xrightarrow{\sigma} \mu}
    & \Rightarrow \tuple{(s, \sigma^2, s', \mhyphen, \mu', \Verifier), (s, \mhyphen, s', \mu, \mu', \Probabilistic)}\in E^\SymbG\\
    {F \in \faults}
    & \Rightarrow \tuple{(s, F^2, s', \mhyphen, \mu', \Verifier), (s, \mhyphen, s', \Dirac_s, \mu', \Probabilistic)}\in E^\SymbG \\
     (s, \mhyphen, s', \mu, \mu', \Probabilistic) \in V^\SymbG_\Probabilistic \wedge {} & 
    {t\in \support{\mu}} \wedge {t'\in\support{\mu'}}
      \Rightarrow \tuple{(s, \mhyphen, s', \mu, \mu', \Probabilistic), (t, \mhyphen, t', \mhyphen, \mhyphen,  \Refuter)}\in E^\SymbG \\
    {v\in (V^\SymbG_\Verifier{\cup}\{\ErrorSt\})} \wedge {} & {(\nexists {v'\neq\ErrorSt} : {\tuple{v,v'}\in E^\SymbG})}
    \Rightarrow  \tuple{v,\ErrorSt}\in E^\SymbG
  \end{align*}
  In addition, for each
  $v=(s, \mhyphen, s', \mu, \mu', \Probabilistic) \in V^\SymbG_\Probabilistic$,
  consider the set of variables
  $X(v)=\{x_{s_i,s_j} \mid s_i \in \text{Supp}(\mu) \wedge s_j \in \text{Supp}(\mu')\}$,
  and the system of equations
  %
  \begin{align*}
    \Eq(v) = {}
    & \textstyle
    \big\{ \sum_{s_j \in \support{\mu'}} x_{s_k,s_j}=\mu(s_k) \mid s_k \in \support{\mu} \big\} \cup {} \\
    & \textstyle
    \big\{ \sum_{s_k \in \support{\mu}} x_{s_k,s_j}=\mu'(s_j) \mid s_j \in \support{\mu'} \big\} \cup {} \\
    & \textstyle
    \big\{ x_{s_k,s_j} \geq 0 \mid s_k \in \support{\mu} \wedge s_j \in \support{\mu'} \big\}
  \end{align*}
\end{definition} 
Notice that $\{\bar{x}_{s_k,s_j}\}_{s_k,s_j}$ is a solution of
$\Eq(v)$ if and only if there is a coupling
$w\in\couplings{\mu}{\mu'}$ such that $w(s_k,s_j)=\bar{x}_{s_k,s_j}$
for all $s_k \in \support{\mu}$ and $s_j \in \support{\mu'}$.

Furthermore, given a set of game vertices
$V' \subseteq V^\SymbG_\Refuter$,
we define $\Eq(v, V')$ by extending $\Eq(v)$ with an equation limiting  the couplings in such a way that vertices in $V'$ are \emph{not} reached.
Formally,
$\Eq(v, V') = \Eq(v) \cup \big\{\sum_{(s, \mhyphen, s', \mhyphen, \mhyphen, \Refuter) \in V'} x_{s,s'} = 0\big\}$.
%
By properly defining a family of sets $V'$, we will show that
the stochastic masking game can be solved in polynomial time
through the symbolic game graph.

\begin{example}
  The fragment of the symbolic game graph of Example~\ref{example:memory}
  in Fig.~\ref{fig:masking:game:graph} is the same as depicted there
  only that all blue shaded components should be removed. (We also have
  the two variants here: one with the red values and the other
  one without them.)
  In the symbolic graph, vertex
  $v=\left((0,0),\codett{tick},(0,0,\textcolor{red}{0}),\mu,\mhyphen,\Verifier\right)$,  for example, has
  only one successor, in contraposition to vertex
  $\left((0,0),\codett{tick},(0,0,\textcolor{red}{0}),\mu,\mhyphen,\mhyphen,\Verifier\right)$
  that has uncountably many in the original game graph.
  Instead, $v$ has associated the set $\Eq(v)$ containing the following
  equations
%
  \begin{align*}
    & x_{(0,1),(0,2,\textcolor{red}{0})} + x_{(0,1),(0,1,\textcolor{red}{0})} + x_{(0,1),(0,0,\textcolor{red}{0})} = \codett{p} &
    & x_{(0,1),(0,2,\textcolor{red}{0})} + x_{(0,0),(0,2,\textcolor{red}{0})} = \codett{p} \\
    & x_{(0,0),(0,2,\textcolor{red}{0})} + x_{(0,0),(0,1,\textcolor{red}{0})} + x_{(0,0),(0,0,\textcolor{red}{0})} = 1-\codett{p} &
    & x_{(0,1),(0,0,\textcolor{red}{0})} + x_{(0,0),(0,0,\textcolor{red}{0})} = 1-\codett{p}-\codett{q} \\
    & x_{(0,1),(0,2,\textcolor{red}{0})} \geq 0 \qquad\quad
      x_{(0,1),(0,1,\textcolor{red}{0})} \geq 0 \qquad\quad
      x_{(0,1),(0,0,\textcolor{red}{0})} \geq 0 &
    & x_{(0,1),(0,1,\textcolor{red}{0})} + x_{(0,0),(0,1,\textcolor{red}{0})} = \codett{q} \\
    & x_{(0,0),(0,2,\textcolor{red}{0})} \geq 0 \qquad\quad
      x_{(0,0),(0,1,\textcolor{red}{0})} \geq 0 \qquad\quad
      x_{(0,0),(0,0,\textcolor{red}{0})} \geq 0
  \end{align*}
  In particular, notice that, if $w_0$ is as defined in Example~\ref{ex:running}, $\bar{x}_{s,s'}=w_0(s,s')$ is a solution for this set of equations.
\end{example}

In the following we propose to use the symbolic game graph to solve
the infinite game. By doing so, we obtain a polynomial time procedure.
We provide an inductive
construction of  vertex regions $U^i$ (for $i \in \mathbb{N}$) containing the
collection of vertices from which the Refuter has a strategy
for reaching the error state with probability greater than $0$ in at
most $i$ steps.

  Let
  $\SymbG_{A,A'} = ( V^{\SymbG},  E^{\SymbG}, V^{\SymbG}_\Refuter, V^{\SymbG}_\Verifier, V^{\SymbG}_\Probabilistic, \InitVertexSG )$
  be a symbolic game graph for PTSs $A$ and $A'$.
  Define $U = \bigcup_{i \geq 0} U^i$ where, for all $i\geq0$,
  \begin{align}
    U^0 = {} & \{\ErrorSt\} 
    & U^{i+1} = {}
    & \textstyle 
    \{v' \mid v' \in V^\SymbG_\Refuter \wedge \post^\SymbG(v') \cap (\bigcup_{j \leq i}U^j) \neq \emptyset \} \cup {}
    \label{def:U-eq}\\
    & & & \textstyle 
    \{v' \mid v' \in V^\SymbG_\Verifier \wedge \post^\SymbG(v') \subseteq \bigcup_{j\leq i}U^j \}  \cup {}
    \notag\\
    & & & \textstyle 
    \{v' \mid v' \in V^\SymbG_\Probabilistic \wedge \Eq(v', \post^\SymbG(v') \cap (\bigcup_{j \leq i}U^j)) \text{ has no solution}\}
    \notag
  \end{align}
The first line in $U^{i+1}$ corresponds to the Refuter and adds a
vertex if some successor is in some previous level $U^j$.  The second
line corresponds to the Verifier and adds a vertex if all its
successors lie in some previous $U^j$.  The last line corresponds to
the probabilistic player.
Notice that, if $\Eq(v', \post(v') \cap U^{i})$ has no solution, then every possible coupling will inevitably lead with some
probability to a ``losing'' state of a smaller level since, in
particular, equation
$\sum_{(s, \mhyphen, s', \mhyphen, \mhyphen, \Refuter) \in (\post(v') {\cap} U^{i})} x_{s,s'} = 0$
cannot be satisfied.

The following theorem provides an algorithm to decide
the stochastic masking game.
\begin{theorem}\label{thm:deciding:the:bisim:game}
  Let $\StochG_{A,A'}$ be a stochastic game graph for PTSs $A$ and
  $A'$, and let $\SymbG_{A,A'}$ be the corresponding symbolic game
  graph.
  Then, the Verifier has a sure (or almost-sure) winning strategy in
  $\StochG_{A,A'}$ for $\neg\Diamond\ErrorSt$ if and only if
  $v^\SymbG_0 \notin U$.
\end{theorem}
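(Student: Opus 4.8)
The plan is to characterize the Verifier's winning region in the concrete game $\StochG_{A,A'}$ by an attractor-style recursion, and then to show that the symbolic set $U$ of Equation~\eqref{def:U-eq} computes exactly the abstraction of the complement of that region. Since $\neg\Diamond\ErrorSt$ is the negation of a reachability objective, sure and almost-sure winning for the Verifier coincide (as noted after Theorem~\ref{thm:wingameStrat}), so it suffices to identify the set of vertices from which the Refuter can reach $\ErrorSt$ with positive probability. I would define this \emph{concrete losing region} $\widetilde{W}=\bigcup_{i\ge 0}\widetilde{W}^i$ by the usual Refuter-attractor recursion: $\widetilde{W}^0=\{\ErrorSt\}$; a Refuter vertex enters at level $i{+}1$ if some successor already lies in $\bigcup_{j\le i}\widetilde{W}^j$; a Verifier vertex enters if all its successors do; and a probabilistic vertex $(s,\mhyphen,s',\mu,\mu',w,\Probabilistic)$ enters if $w$ assigns positive probability to $\bigcup_{j\le i}\widetilde{W}^j$. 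Because every play that reaches $\ErrorSt$ does so in finitely many steps, this recursion is well founded despite the uncountably many probabilistic vertices, and the standard reachability-game argument \cite{ChatterjeeH12} yields that the Verifier has a winning strategy for $\neg\Diamond\ErrorSt$ from $v$ iff $v\notin\widetilde{W}$.

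The core of the proof is then the level-preserving identity $v\in\widetilde{W}^i \iff \pparenthesis{v}\in U^i$ (for Refuter and Verifier vertices), together with the corresponding statement for probabilistic vertices, which I would establish by simultaneous induction on $i$; here $\pparenthesis{v}$ denotes the symbolic vertex obtained from a concrete vertex $v$ by deleting its coupling component, with $\pparenthesis{\ErrorSt}=\ErrorSt$. The Refuter and Verifier cases are immediate, because the symbolic graph retains exactly the same Refuter and Verifier vertices and edges (couplings never appear on them): the ``some successor'' condition for Refuter vertices and the ``all successors'' condition for Verifier vertices transfer verbatim between the two graphs under $\pparenthesis{\cdot}$. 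Note that both recursions advance through the same three-layer cycle $\Refuter\to\Verifier\to\Probabilistic\to\Refuter$ with the level incrementing by one at each layer, so the two constructions march in lockstep and the induction levels align exactly.

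The delicate case is the probabilistic player, and this is where I expect the main obstacle. A symbolic probabilistic vertex $v'=(s,\mhyphen,s',\mu,\mu',\Probabilistic)$ enters $U^{i+1}$ exactly when $\Eq\big(v',\,\post^{\SymbG}(v')\cap\bigcup_{j\le i}U^j\big)$ has no solution. By the correspondence between solutions of $\Eq$ and couplings recorded just after Definition~\ref{def:symbolicGameGraph}, a solution of this augmented system is precisely a coupling $w\in\couplings{\mu}{\mu'}$ whose support avoids every Refuter-successor already placed in $\bigcup_{j\le i}U^j$—that is, whose concrete counterpart $(s,\mhyphen,s',\mu,\mu',w,\Probabilistic)$ lies \emph{outside} $\widetilde{W}$. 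Hence unsolvability means that every coupling charges some losing successor with positive probability, so \emph{all} concrete counterparts of $v'$ belong to $\widetilde{W}^{\le i+1}$; conversely, solvability exhibits one safe coupling, and therefore gives the preceding Verifier vertex a concrete successor outside $\widetilde{W}$. Invoking the induction hypothesis at levels $\le i$ to identify which Refuter-successors are losing, this gives both directions of the probabilistic step and closes the induction. The technical heart is exactly this equivalence, ``unsolvability of the zero-constrained transport system $\iff$ every coupling is supported meeting the losing set,'' forced by the extra equation $\sum_{(s,\mhyphen,s',\mhyphen,\mhyphen,\Refuter)\in V'} x_{s,s'}=0$ of $\Eq(v',V')$, together with the bookkeeping needed to keep the ``for all couplings'' quantifier on concrete probabilistic vertices synchronized with the single symbolic vertex.

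Finally, applying the identity at the initial vertices—where $\pparenthesis{\InitVertex}=\InitVertexSG$, since Refuter vertices carry no coupling component—gives $\InitVertexSG\in U$ iff $\InitVertex\in\widetilde{W}$ iff the Verifier has no winning strategy for $\neg\Diamond\ErrorSt$ in $\StochG_{A,A'}$. Taking the contrapositive is precisely the claim, and since the whole argument is carried out over the finite symbolic graph, it simultaneously establishes that the condition $\InitVertexSG\notin U$ is decidable in polynomial time (each level requires only a fixed number of linear-feasibility tests).
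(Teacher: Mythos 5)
Your proposal is correct, and its backbone --- an attractor-style characterization $\widetilde{W}$ of the Refuter's winning region in the concrete (uncountable) game, justified by the finiteness of any play reaching $\ErrorSt$, followed by a simultaneous induction matching the concrete losing sets with the symbolic sets $U^i$ --- is the same as the paper's. The difference is that the paper interposes a second concrete construction $W$, built only from probabilistic vertices whose coupling component is a vertex of the transportation polytope $\couplings{\mu}{\mu'}$, proves that the Verifier wins iff $v\notin W$ (using that every coupling is a convex combination of polytope vertices, so a coupling avoiding the losing set exists iff a \emph{vertex} coupling avoiding it exists), and only then relates $W$ to $U$. You collapse these two steps into one by relating $\widetilde{W}$ directly to $U$, observing that solvability of $\Eq(v',V')$ is, by construction, equivalent to the existence of \emph{some} coupling (not necessarily a polytope vertex) whose support avoids $V'$; this makes the convex-combination/face argument unnecessary for the Boolean game and is arguably cleaner. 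What the paper's detour buys is the finite object $W$ and the vertex-coupling machinery, which it reuses later for the quantitative game (the vertex snippet $\StochH_{A,A'}$ and extreme strategies), so the two routes trade a shorter proof here against groundwork for Section~4. One small caveat: your claim of an exactly level-preserving identity $v\in\widetilde{W}^i\iff\pparenthesis{v}\in U^i$ is slightly too strong for probabilistic vertices (a symbolic probabilistic vertex enters $U$ at level $i{+}1$ iff \emph{all} of its uncountably many concrete counterparts lie in $\bigcup_{j\le i+1}\widetilde{W}^j$, and different counterparts may enter at different levels), but since your recursions are phrased with the cumulative unions $\bigcup_{j\le i}$ and you state the probabilistic case in exactly this ``all counterparts'' form, the induction goes through as you describe.
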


Theorems~\ref{thm:wingameStrat} and~\ref{thm:deciding:the:bisim:game} provide an alternative
algorithm to decide whether there is a probabilistic masking
simulation between $A$ and $A'$.
This can be done in polynomial time, since $\Eq(v,C)$ can be solved in
polynomial time (e.g, using linear programming) and the number of
iterations to construct $U$ is bounded by $|V^\SymbG|$.  Since
$V^\SymbG$ linearly depends on the transitions of the involved PTSs,
the complexity is in
$O(\textit{Poly}({|{\xrightarrow{}}|}\cdot{|{\xrightarrowprime{}}|)})$.

\section{Quantifying Fault Tolerance} \label{sec:probAlmostSure}

Probabilistic masking simulation determines whether a fault-tolerant
implementation is able to completely mask faults.  However, in
practice, this kind of masking fault-tolerance is uncommon. Usually,
fault-tolerant systems are able to mask a number of faults before
exhibiting a failure.
In this section we extend the game theory presented
above to provide a measure
for the system effectiveness on masking faults.
To do this, we extend the stochastic masking game with a
quantitative objective function.
The expected value of this function collects the (weighted) ``milestones''
that the fault-tolerant implementation is expected to cross before
failing.  A milestone is any interesting event that may occur during
a system execution.  For instance, a milestone may be 
the successful masking of a fault.
In this case, the measure will reflect the number of faults that are
tolerated by the system before crashing. Another milestone may be
successful acknowledgments in a transmission protocol. This
measures the expected number of chunks that the protocol is able to
transfer before failing. 
%
Thus, milestones are some designated action labels on the
implementation model and, as they may reflect different events, their
value may depend on the importance of such events.

\begin{definition}%
  Let $A' =( S', \SigmaF, \rightarrow', s_0' )$ be a PTS modeling an
  implementation.
  A \emph{milestone} is a function $\milestones:\SigmaF\to\Nat_0$.
\end{definition}
%
Given a milestone $\milestones$ for $A'$, the reward
$\mreward^\StochG$ on
$\StochG_{A,A'} = (V^\StochG, E^\StochG, V^\StochG_\Refuter, V^\StochG_\Verifier, V^\StochG_\Probabilistic, \InitVertex, \delta^\StochG)$ is defined by
$\mreward^\StochG(v) = \milestones(\sigma)$ if $v\in V^\StochG_\Verifier$ and
$\pr{1}{v}\in\{\sigma^1,\sigma^2\}$; otherwise,
$\mreward^\StochG(v) = 0$.
Function $\mreward^\StochG$ collects milestones (when available) only
once for each round of the game. This can be done only at Verifier's
vertices since they are the only ones that save the label that it is
being played in the round.
The \emph{masking payoff function} is then defined by
$\FMask(\rho) = \lim_{n \rightarrow \infty} (\sum^{n}_{i=0} \mreward^\StochG(\rho_i))$.
%
Therefore, the payoff function $\FMask$ represents the total of weighted
milestones that a fault-tolerant implementation is able to achieve
until an error state is reached.
This type of payoff functions are usually called \emph{total rewards}
in the literature.
One may think of this as a game played by the fault-tolerance built-in
mechanism and a (malicious) player that chooses the way in which
faults occur. In this game, the Verifier is the maximizer (she intends
to obtain as many milestones as possible) and the Refuter is the
minimizer (she intends to prevent the Verifier from collecting
milestones).

Thus, the game aims to optimize 
$\Expect{\strat{\Verifier}}{\strat{\Refuter}}_{\StochG, \InitVertex}[\FMask]$, i.e., the expected value of random variable $\FMask$.
%
One technical issue with total rewards objectives is that the game value may be not well-defined in $\mathbb{R}$.  For instance, there could be plays not reaching an end state wherein the players collect an infinite amount of rewards. A usual condition  for ensuring that the game value is well-defined is that of \emph{almost-surely stopping},  i.e.,  the game has to  reach a
sink vertex with probability~1, for every pair of strategies~\cite{FilarV96}.  In~\cite{DBLP:conf/cav/CastroDDP22}, we have generalized this condition to that of \emph{almost-surely stopping under fairness}, that is,  the error state
$\ErrorSt$ is reached with probability~1 provided the Refuter plays
fair.  In this case the games are well-defined in $\mathbb{R}$ and determined.  In simple words,  determination means that the knowledge of the opponent's strategy gives no benefit for the players.
%
%

It is worth noting that fairness is necessary to prevent the Refuter from stalling the
game. 
%
For instance, consider 
Example~\ref{example:memory} and the stochastic masking game
between the nominal and faulty models of
Figs.~\ref{fig:exam1MemCell:nom} and~\ref{fig:exam1MemCell:ft} (omitting the red part).
One would expect that the game leads to a failure with probability~1.
%
%
However, the Refuter has strategies to avoid $\ErrorSt$ with positive
probability.
For instance, the Refuter may
always play the reading action forcing the Verifier to mimic it
forever and hence making the probability of reaching the error
equals~0.
By doing this, the Refuter stalls the game,  forbidding  progress and
hence avoiding the occurrence of the fault.
Clearly, this is against  the intuitive behavior of faults which
one expects will eventually occur if waiting long enough.
The assumption of fairness over  Refuter plays  rules out this counter-intuitive behavior of the Refuter.
Roughly speaking,  a Refuter's fair play is one in which the Refuter
commits to follow a strong fair pattern, i.e., that includes
infinitely often any transition that is enabled infinitely often.  Then, a
fair strategy for the Refuter is a strategy that always measures 1 on
the set of all the Refuter's fair plays, regardless of the strategy of
the Verifier.
The definitions below follow the style
in~\cite{DBLP:journals/dc/BaierK98,BaierK08,DBLP:conf/cav/CastroDDP22}.


%
%
\begin{definition}
  Given a masking game
  $\StochG_{A,A'} = (V^\StochG, E^\StochG, V^\StochG_\Refuter, V^\StochG_\Verifier, V^\StochG_\Probabilistic, \InitVertex, \delta^\StochG)$,
  the \emph{set of all Refuter's fair plays} is defined by
  $ 
	\RFP = \{ \rho \in \Omega \mid v \in \inf(\rho) \cap V^\StochG_\Refuter \Rightarrow \post(v) \subseteq \inf(\rho) \}
  $.
  A Refuter strategy $\strat{\Refuter}$ is said to
  be \emph{almost-sure fair} iff, for every Verifier's strategy
  $\strat{\Verifier}$,
  $\Prob{\strat{\Refuter}}{\strat{\Verifier}}_{\StochG,\InitVertex}(\RFP) = 1$.
  %
  We let $\FairStrats{\Refuter}$ denote the set of all fair strategies
  for the Refuter.
\end{definition}

Under this concept, the stochastic masking game is almost-sure failing
under fairness if for every Verifier's strategy and every Refuter's
fair strategy, the game leads to an error with probability~1.  This is
formally defined as follows.

\begin{definition}
  Let $A$ and $A'$ be two PTSs.  We say that the stochastic masking
  game $\StochG_{A,A'}$ is \emph{almost-sure failing under fairness}
  iff, for every strategy
  $\strat{\Verifier} \in \Strategies{\Verifier}$ and any
  fair strategy
  $\strat{\Refuter} \in \FairStrats{\Refuter}$,
  $\Prob{\strat{\Verifier}}{\strat{\Refuter}}_{\StochG, \InitVertex}(\Diamond \ErrorSt)=1$.
\end{definition}

Interestingly, under the strong fairness assumption,
the determinacy of games is preserved for finite stochastic
games~\cite{DBLP:conf/cav/CastroDDP22}.  The rest of the section is
precisely devoted to bring our setting to the framework
of~\cite{DBLP:conf/cav/CastroDDP22} and thus provide an algorithmic
solution.

A strategy $\strat{i}$, $i\in\{\Refuter,\Verifier\}$, is
\emph{semi-Markov} if for every
$\hat{\rho},\hat{\rho}'\in(V^\StochG)^*$ and $v\in V^\StochG_i$,
$|\hat{\rho}|=|\hat{\rho}'|$ implies
$\strat{i}(\hat{\rho}v)=\strat{i}(\hat{\rho}'v)$, that is, the
decisions of $\strat{i}$ depend only on the length of the run and its
last state.  Thus, we write $\strat{i}(n,v)$ instead of
$\strat{i}(\hat{\rho}v)$ if $|\hat{\rho}|=n$. 
Let $\SemiMarkovStrats{i}$ denote the set of all semi-Markov
strategies for Player $i$ and $\SemiMarkovFairStrats{i}$ the set of
all its \emph{fair} semi-Markov strategies.

The next lemma states that, if the Refuter plays a semi-Markov
strategy, the Verifier achieves equal results regardless whether she plays an
arbitrary strategy or limits to playing only semi-Markov strategies.
The proof resembles that of~\cite[Lemma 2]{DBLP:conf/cav/CastroDDP22}
taking care of the fact that the set of vertices of the stochastic masking
game is uncountable.  Since probabilities are anyway discrete, this
is not a major technical issue,
but it deserves attention in the proof.

\begin{lemma}\label{lm:semmimarkov2}
  Let $\StochG_{A,A'}$ be a stochastic masking game graph and let
  $\strat{\Refuter} \in \SemiMarkovStrats{\Refuter}$ be a semi-Markov
  strategy.
  Then, for any $\strat{\Verifier} \in \Strategies{\Verifier}$,
  there is a semi-Markov strategy
  $\starredstrat{\Verifier} \in \SemiMarkovStrats{\Verifier}$
  such that
  $\Expect{\strat{\Verifier}}{\strat{\Refuter}}_{\StochG, v}[\FMask] =
   \Expect{\starredstrat{\Verifier}}{\strat{\Refuter}}_{\StochG, v}[\FMask]$.
\end{lemma}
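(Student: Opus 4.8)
The plan is to fix the starting vertex $v$, the semi-Markov Refuter strategy $\strat{\Refuter}$, and the arbitrary Verifier strategy $\strat{\Verifier}$, and then to build $\starredstrat{\Verifier}$ as a time-indexed, probability-weighted average of $\strat{\Verifier}$. Since the payoff is the expected total of a non-negative reward $\mreward^\StochG$ that depends only on the vertex being visited, monotone convergence lets us write $\Expect{\strat{\Verifier}}{\strat{\Refuter}}_{\StochG, v}[\FMask] = \sum_{n \geq 0} \sum_{w} \mreward^\StochG(w) \cdot \Prob{\strat{\Verifier}}{\strat{\Refuter}}_{\StochG, v}(\{\rho \mid \rho_n = w\})$. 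Hence it suffices to exhibit a semi-Markov $\starredstrat{\Verifier}$ under which the marginal distribution of $\rho_n$ coincides, for every $n$, with the one induced by $\strat{\Verifier}$: reward collection then matches step by step and the two expectations agree (both possibly $+\infty$).

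For the construction, write $P_n(w) = \Prob{\strat{\Verifier}}{\strat{\Refuter}}_{\StochG, v}(\{\rho \mid \rho_n = w\})$ and let $P_n(\hat\rho w)$ denote the probability of the cylinder on the prefix $\hat\rho w$ with $|\hat\rho| = n$, so that $\sum_{\hat\rho} P_n(\hat\rho w) = P_n(w)$. For each step $n$ and Verifier vertex $w$ with $P_n(w) > 0$ I would set
\[
  \starredstrat{\Verifier}(n, w) = \frac{1}{P_n(w)} \sum_{\hat\rho \,:\, |\hat\rho| = n} P_n(\hat\rho w) \cdot \strat{\Verifier}(\hat\rho w),
\]
and fix it arbitrarily (a Dirac on some successor) when $P_n(w) = 0$. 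This is a convex combination of distributions each supported on $\post(w)$, hence a legal move supported on $\post(w)$, and it depends only on $(n, w)$, so $\starredstrat{\Verifier} \in \SemiMarkovStrats{\Verifier}$.

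The core is an induction on $n$ showing $P^\star_n(w) = P_n(w)$ for all $w$, where $P^\star$ is the measure induced by $(\starredstrat{\Verifier}, \strat{\Refuter})$. The base case holds since both processes start in the Dirac on $v$. For the step, the passage from level $n$ to $n+1$ at Refuter and probabilistic vertices is governed only by $\strat{\Refuter}(n,\cdot)$ (semi-Markov) and by $\delta^\StochG$, both functions of the current vertex and index alone, so these contributions match once $P^\star_n = P_n$. At a Verifier vertex $w$ the defining identity gives $P^\star_n(w)\cdot\starredstrat{\Verifier}(n,w)(w') = P_n(w)\cdot\starredstrat{\Verifier}(n,w)(w') = \sum_{\hat\rho} P_n(\hat\rho w)\,\strat{\Verifier}(\hat\rho w)(w')$, which is exactly the Verifier contribution to $P_{n+1}(w')$ under $(\strat{\Verifier}, \strat{\Refuter})$. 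Summing over all predecessors $w$ yields $P^\star_{n+1} = P_{n+1}$, closing the induction.

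The main obstacle — the one flagged in the paper — is that $V^\StochG$ is uncountable, so the sums over histories $\hat\rho$ and over predecessors $w$ range \emph{a priori} over uncountable sets. The remedy is the elementary observation that every discrete probability distribution has countable support: the Verifier's choices, the Refuter's semi-Markov choices, and each $\delta^\StochG(\cdot)$ all have countable (in the last case, finite) support. Hence, by induction on $n$, the set $\{w \mid P_n(w) > 0\}$ is countable for every $n$, so the displayed sums are over countable sets and are well-defined, the averaging defining $\starredstrat{\Verifier}$ makes sense, and Tonelli's theorem justifies the interchange of summation and expectation. With this countability in place the argument is a direct adaptation of \cite[Lemma 2]{DBLP:conf/cav/CastroDDP22}.
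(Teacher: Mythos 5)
Your proof is correct and follows essentially the approach the paper indicates: the standard semi-Markov flattening from \cite[Lemma 2]{DBLP:conf/cav/CastroDDP22}, defining $\starredstrat{\Verifier}(n,w)$ as the history-averaged conditional behaviour of $\strat{\Verifier}$, proving by induction that the per-step marginals coincide, and concluding via Tonelli since $\mreward^\StochG\geq 0$. You also address exactly the point the paper flags about uncountably many vertices, resolving it with the same observation (all distributions involved are discrete, so only countably many vertices carry positive probability at each step), so nothing further is needed.
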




A Verifyier strategy
$\starredstrat{\Verifier}\in\Strategies{\Verifier}$ is \emph{extreme} if it only moves to probabilistic
vertices containing couplings that are on the polytope vertices, that is, if for all
$\hat{\rho}\in(V^\StochG)^*{\times}V^\StochG_\Verifier$,
$\starredstrat{\Verifier}(\hat{\rho})((s,\mhyphen,s',\mu,\mu',w,\Probabilistic))>0$
implies that $w\in\vertices{\couplings{\mu}{\mu'}}$.
Let $\XSemiMarkovStrats{\Verifier}$ be the set of all extreme
semi-Markov strategies for the Verifier.

Lemma~\ref{lm:semmimarkov2} can be strengthened.  Thus, 
if the Refuter plays a semi-Markov strategy, the Verifier can achieve
the same result as the general case by  restricting herself to play  only extreme
semi-Markov strategies.

\begin{lemma}\label{lm:smstrat:to:smstrat:on:vertices}%
  Let $\StochG_{A,A'}$ be a stochastic masking game graph and let
  $\strat{\Refuter} \in \SemiMarkovStrats{\Refuter}$ be a semi-Markov
  strategy.
  Then, for any $\strat{\Verifier} \in \SemiMarkovStrats{\Verifier}$,
  there is an extreme semi-Markov strategy
  $\starredstrat{\Verifier} \in \XSemiMarkovStrats{\Verifier}$ such that
  for all $v\in V^\StochG_\Refuter$,
  $\Expect{\strat{\Verifier}}{\strat{\Refuter}}_{\StochG,v}[\FMask]
  = \Expect{\starredstrat{\Verifier}}{\strat{\Refuter}}_{\StochG,v}[\FMask]$.
\end{lemma}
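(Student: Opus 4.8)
The plan is to obtain $\starredstrat{\Verifier}$ by \emph{splitting} each coupling that $\strat{\Verifier}$ uses into the polytope vertices it is built from, and then to argue that this split leaves unchanged the (time‑inhomogeneous) Markov process that the game induces on the Refuter vertices, together with the rewards it collects. Concretely, fix a time index $n$ and a Verifier vertex $u=(s,\sigma^1,s',\mu,\mhyphen,\mhyphen,\Verifier)\in V^\StochG_\Verifier$ (the implementation case $\sigma^2$ is symmetric). Since $\strat{\Verifier}$ is semi‑Markov, $\strat{\Verifier}(n,u)$ is a \emph{discrete} distribution over probabilistic successors $p=(s,\mhyphen,s',\mu,\mu',w,\Probabilistic)$. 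Each $w$ lies in $\couplings{\mu}{\mu'}$, which—because $S$ and $S'$ are finite—is a polytope with finitely many vertices, so $w$ can be written as a \emph{finite} convex combination $w=\sum_{k}\lambda^p_k\,w^p_k$ with $w^p_k\in\vertices{\couplings{\mu}{\mu'}}$ and $\sum_k\lambda^p_k=1$. I would then define $\starredstrat{\Verifier}(n,u)$ to give the vertex‑coupling successor $p_k=(s,\mhyphen,s',\mu,\mu',w^p_k,\Probabilistic)$ the mass $\strat{\Verifier}(n,u)(p)\cdot\lambda^p_k$ (adding up contributions of coincident $p_k$). This is again semi‑Markov, it has countable support because $\strat{\Verifier}(n,u)$ does and each decomposition is finite, and it places mass only on probabilistic vertices whose coupling is a polytope vertex; hence $\starredstrat{\Verifier}\in\XSemiMarkovStrats{\Verifier}$.

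The key observation is that $\delta^\StochG$ is \emph{linear} in the coupling component: from $p$ the next Refuter vertex $(t,\mhyphen,t',\mhyphen,\mhyphen,\mhyphen,\Refuter)$ is reached with probability $w(t,t')$. Therefore, for fixed $u$ and fixed next Refuter vertex, the one‑round probability of moving there coincides under both strategies:
\[
  \sum_{p,k}\starredstrat{\Verifier}(n,u)(p_k)\,w^p_k(t,t')
  =\sum_{p}\strat{\Verifier}(n,u)(p)\sum_{k}\lambda^p_k\,w^p_k(t,t')
  =\sum_{p}\strat{\Verifier}(n,u)(p)\,w(t,t'),
\]
using $w=\sum_k\lambda^p_k\,w^p_k$. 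Moreover, the reward collected in a round is $\mreward^\StochG(u)=\milestones(\sigma)$, which is determined solely by the Verifier vertex $u$ and is \emph{independent of the coupling}. As $\strat{\Refuter}$ is untouched, the Refuter selects each $u$ with the same probability under both Verifier strategies, so the joint law of $(\text{round reward},\ \text{next Refuter vertex})$ given the current Refuter vertex is the same in both cases.

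Putting these together, I would consider the reduced process obtained by contracting each round $\Refuter\to\Verifier\to\Probabilistic\to\Refuter$ to its two endpoints and the intermediate Verifier vertex; note that $V^\StochG_\Refuter$ and $V^\StochG_\Verifier$ are \emph{finite}, so this reduced process lives on a finite alphabet and the only uncountable objects ($V^\StochG_\Probabilistic$) are marginalised out. Its one‑round transition kernel and its per‑round reward coincide for $\strat{\Verifier}$ and $\starredstrat{\Verifier}$ against the fixed $\strat{\Refuter}$, so by induction on rounds the two strategies yield the same joint law of the sequence of visited Refuter and Verifier vertices from every $v\in V^\StochG_\Refuter$. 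Since $\FMask=\lim_{n}\sum_{i=0}^{n}\mreward^\StochG(\rho_i)$ factors through this sequence, the expectations $\Expect{\strat{\Verifier}}{\strat{\Refuter}}_{\StochG,v}[\FMask]$ and $\Expect{\starredstrat{\Verifier}}{\strat{\Refuter}}_{\StochG,v}[\FMask]$ agree, both possibly $+\infty$ (harmless, as $\mreward^\StochG\ge 0$ and monotone convergence lets one pass the equality through the finite‑horizon truncations). The step deserving most care is the measure‑theoretic bookkeeping flagged before Lemma~\ref{lm:semmimarkov2}: even though $V^\StochG_\Probabilistic$ is uncountable, the strategies are discrete, so the decompositions stay finite and $\starredstrat{\Verifier}$ remains a legitimate strategy; checking that the induced measure on plays is well defined and that the per‑round kernel equality above genuinely determines $\Expect{}{}[\FMask]$ is where the real work lies.
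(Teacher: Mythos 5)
Your construction of $\starredstrat{\Verifier}$ — composing the Verifier's round decision with the convex decomposition of each chosen coupling into polytope vertices, then using linearity of $\delta^\StochG$ in the coupling to show the induced round kernel on Refuter/Verifier vertices and the per-round reward are unchanged — is exactly the construction the paper sketches for this lemma. The proposal is correct and takes essentially the same approach.
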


The key of the proof of Lemma~\ref{lm:smstrat:to:smstrat:on:vertices}
lies on the construction of $\starredstrat{\Verifier}$ which is
defined so that, for every $n\in\Nat$ and $v_1\in V^\StochG_\Verifier$,
the probabilistic decision made by $\starredstrat{\Verifier}(n,v_1)$
corresponds to a proper composition of the probabilistic decisions of
$\strat{\Verifier}(n,v_1)$, and each convex combination of vertex
couplings that define the coupling within each probabilistic successor
$v_2\in\support{\strat{\Verifier}(n,v_1)}$.
%

Notice that, by traveling only through probabilistic vertices on
$\StochG_{A,A'}$ that are defined by vertex couplings, only a finite
number of the game vertices are touched when the Verifier uses extreme
strategies.
Thus, we let the stochastic game graph $\StochH_{A,A'}$ be the
\emph{vertex snippet of $\StochG_{A,A'}$} and define it to be the same
as $\StochG_{A,A'}$ only that probabilistic vertices are limited to
those that contain couplings in the vertices of the polytope, that is,
\begin{align*}
  V^{\StochH}_\Probabilistic = \
  & \{ (s, \mhyphen, s', \mu, \mu', w, \Probabilistic) \mid
       s \in S \wedge s' \in S' 
       \wedge w \in \vertices{\couplings{\mu}{\mu'}} \wedge {}\\
       & \phantom{\{ (s, \mhyphen, s', \mu, \mu', w, \Probabilistic) \mid {}}
       \exists \sigma{\in}\SigmaF :
       {({s\xrightarrow{\sigma} \mu} \vee (\sigma{\in}\faults \wedge \mu=\Dirac_s))}
       \wedge {s'\xrightarrowprime{\sigma}\mu'}\}.
\end{align*}
The rest of the elements of $\StochH_{A,A'}$ are defined by
properly restricting the domain of the respective components in
$\StochG_{A,A'}$.  Notice that $\StochH_{A,A'}$ is finite.

Now observe that, if the Verifier semi-Markov strategies are considered
as functions with domain in $(\Nat\times V^\StochG_\Verifier)$, then
the set of all extreme semi-Markov strategies in $\StochG_{A,A'}$
corresponds to the set of all semi-Markov strategies of
$\StochH_{A,A'}$. That is: 
$\XSemiMarkovStrats{\Verifier,\StochG}=\SemiMarkovStrats{\Verifier,\StochH}$,
where subscripts $\StochG$ and $\StochH$ indicate whether the strategies belong to $\StochG_{A,A'}$ or
$\StochH_{A,A'}$, respectively.
Similarly,  the same holds for the set of all extreme deterministic
memoryless strategies, that is: 
$\XDetMemorylessStrats{\Verifier,\StochG}=\DetMemorylessStrats{\Verifier,\StochH}$.
Given the fact that $V^\StochG_\Refuter=V^{\StochH}_\Refuter$
and $V^\StochG_\Verifier=V^{\StochH}_\Verifier$, the set of all
Refuter's deterministic memoryless fair strategies are the same
in both game graphs,
i.e., $\DetMemorylessFairStrats{\Refuter,\StochG}=\DetMemorylessFairStrats{\Refuter,\StochH}$.
The following proposition follows directly from these observations.

\begin{proposition}\label{prop:popurri}%
  Let $\StochG_{A,A'}$ be a stochastic game graph and
  $\StochH_{A,A'}$ its vertex snippet.
  Then, for all $v\in V^\StochG_\Refuter$($=V^{\StochH}_\Refuter$),
  we have:
  \begin{enumerate}
  \item\label{prop:popurri:iii}%
    for all
    $\strat{\Refuter}\in\DetMemorylessFairStrats{\Refuter,\StochG}$,($=\DetMemorylessFairStrats{\Refuter,\StochH}$),
    $\sup_{\strat{\Verifier}\in\XSemiMarkovStrats{\Verifier,\StochG}}\Expect{\strat{\Verifier}}{\strat{\Refuter}}_{\StochG,v}[\FMask] =
    \sup_{\strat{\Verifier}\in\SemiMarkovStrats{\Verifier,\StochH}}\Expect{\strat{\Verifier}}{\strat{\Refuter}}_{\StochH,v}[\FMask]$; and
  \item\label{prop:popurri:v}%
    for all
    $\strat{\Verifier}\in\XDetMemorylessStrats{\Verifier,\StochG}$($=\DetMemorylessStrats{\Verifier,\StochH}$),
    $\inf_{\strat{\Refuter}\in\FairStrats{\Refuter,\StochG}}\Expect{\strat{\Verifier}}{\strat{\Refuter}}_{\StochG,v}[\FMask] =
    \inf_{\strat{\Refuter}\in\FairStrats{\Refuter,\StochH}}\Expect{\strat{\Verifier}}{\strat{\Refuter}}_{\StochH,v}[\FMask]$.
  \end{enumerate}
\end{proposition}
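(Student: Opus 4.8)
The plan is to reduce both identities to a single structural observation: \emph{along any play produced by an extreme Verifier strategy, the Markov chain induced in $\StochG_{A,A'}$ coincides with the one induced by the corresponding strategy in its vertex snippet $\StochH_{A,A'}$}. Indeed, an extreme Verifier strategy only moves to probabilistic vertices whose coupling is a polytope vertex, i.e.\ to vertices of $V^\StochH_\Probabilistic$; from such a vertex the transition function $\delta^\StochG$ is determined by the coupling $w$ exactly as in $\StochH_{A,A'}$, while the edges among Refuter and Verifier vertices are literally the same in both graphs, since $V^\StochG_\Refuter=V^\StochH_\Refuter$ and $V^\StochG_\Verifier=V^\StochH_\Verifier$ and these carry no coupling component. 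As the reward $\mreward^\StochG$ is collected only at Verifier vertices, which are shared, the two induced chains assign the same probability and the same reward to every cylinder. Hence, for a Refuter strategy shared by both graphs and any extreme Verifier strategy $\strat{\Verifier}$ (identified with its $\StochH$ counterpart via $\XSemiMarkovStrats{\Verifier,\StochG}=\SemiMarkovStrats{\Verifier,\StochH}$ and $\XDetMemorylessStrats{\Verifier,\StochG}=\DetMemorylessStrats{\Verifier,\StochH}$), one gets $\Expect{\strat{\Verifier}}{\strat{\Refuter}}_{\StochG,v}[\FMask]=\Expect{\strat{\Verifier}}{\strat{\Refuter}}_{\StochH,v}[\FMask]$.

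For item~\ref{prop:popurri:iii} this immediately settles the claim. Fix $\strat{\Refuter}\in\DetMemorylessFairStrats{\Refuter,\StochG}=\DetMemorylessFairStrats{\Refuter,\StochH}$ (the identity already noted, holding because Refuter vertices and their successors are common to both graphs). The two suprema range over the \emph{same} index set $\XSemiMarkovStrats{\Verifier,\StochG}=\SemiMarkovStrats{\Verifier,\StochH}$, and by the observation above the quantity being optimised is equal term by term, so the suprema coincide.

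For item~\ref{prop:popurri:v} fix an extreme deterministic memoryless Verifier strategy $\strat{\Verifier}\in\XDetMemorylessStrats{\Verifier,\StochG}=\DetMemorylessStrats{\Verifier,\StochH}$; under it the reachable part of $\StochG_{A,A'}$ lies entirely inside $\StochH_{A,A'}$. I would prove the two inequalities separately. For $\inf_{\StochH}\le\inf_{\StochG}$, take any $\strat{\Refuter}\in\FairStrats{\Refuter,\StochG}$ and restrict it to histories living in $\StochH_{A,A'}$; this restriction is a Refuter strategy of $\StochH_{A,A'}$, it realises the same payoff against $\strat{\Verifier}$ (the off-$\StochH$ histories being unreachable under an extreme $\strat{\Verifier}$), and it is $\StochH$-fair because every $\StochH$-Verifier strategy is an extreme $\StochG$-Verifier strategy, against which $\strat{\Refuter}$ is already fair. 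For the converse $\inf_{\StochG}\le\inf_{\StochH}$, extend any $\strat{\Refuter}\in\FairStrats{\Refuter,\StochH}$ to a $\StochG$ strategy that agrees with it on histories confined to $\StochH_{A,A'}$ and switches to a fixed fair round-robin Refuter strategy on any history that has already visited a non-vertex coupling. This extension realises the same payoff against the extreme $\strat{\Verifier}$, and it is $\StochG$-fair: on plays leaving $\StochH_{A,A'}$ fairness is guaranteed surely by the round-robin tail (fairness being a tail property of $\inf(\rho)$), whereas on plays confined to $\StochH_{A,A'}$ it holds with probability $1$ by $\StochH$-fairness of $\strat{\Refuter}$.

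The main obstacle is precisely this last fairness-preservation step in item~\ref{prop:popurri:v}: unlike item~\ref{prop:popurri:iii}, the two infima range over fairness classes $\FairStrats{\Refuter,\StochG}$ and $\FairStrats{\Refuter,\StochH}$ defined with respect to \emph{different} families of opponent strategies, so the equality is not a mere reindexing. The crux is to verify that restricting a $\StochG$-fair strategy preserves fairness (which follows since $\StochH$-Verifier strategies embed as extreme $\StochG$-Verifier strategies) and, conversely, that an $\StochH$-fair strategy admits a $\StochG$-fair extension with unchanged payoff, for which the round-robin construction together with the tail-closedness of $\RFP$ are the key ingredients. Everything else is the bookkeeping already isolated by the set identities stated just before the proposition.
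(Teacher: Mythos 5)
Your proposal is correct and rests on the same core observations the paper uses: the identifications $\XSemiMarkovStrats{\Verifier,\StochG}=\SemiMarkovStrats{\Verifier,\StochH}$ and $\XDetMemorylessStrats{\Verifier,\StochG}=\DetMemorylessStrats{\Verifier,\StochH}$, plus the fact that an extreme Verifier strategy induces literally the same Markov chain (hence the same expectation of $\FMask$) in $\StochG_{A,A'}$ and $\StochH_{A,A'}$. The difference is one of depth: the paper offers no argument beyond stating those set identities and declaring that the proposition ``follows directly from these observations'', whereas you correctly single out item 2 as the place where this is not a mere reindexing, because $\FairStrats{\Refuter,\StochG}$ and $\FairStrats{\Refuter,\StochH}$ are defined by quantifying fairness over \emph{different} families of opponent strategies. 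Your restriction/extension argument (restrict a $\StochG$-fair strategy to $\StochH$-histories; extend an $\StochH$-fair strategy by a round-robin tail after the first non-vertex coupling, using that $\RFP$ is a tail event) is the right way to discharge this and is more informative than what the paper records. The one step you should tighten is the final claim that, under the extended strategy and an arbitrary (possibly non-extreme) $\StochG$-Verifier strategy, fairness holds almost surely on the event of plays confined to $\StochH_{A,A'}$: that Verifier strategy restricted to $\StochH$-histories need not be an $\StochH$-Verifier strategy, since it may put positive mass on non-vertex couplings, so $\StochH$-fairness of $\strat{\Refuter}$ cannot be invoked verbatim. A short domination argument closes this: renormalising the Verifier's choices onto vertex couplings yields an $\StochH$-Verifier strategy whose induced measure dominates, cylinder by cylinder, the original measure restricted to the $\StochH$-confined plays; since the complement of $\RFP$ is null for the former, it is null for the latter. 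With that addendum your argument is complete.
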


The following theorem not only states that the game for optimizing
the expected value of the masking payoff function is determined, but
it also guarantees that it can be solved using the finite vertex snippet
of the stochastic game subgraph.

\begin{theorem}\label{theorem:determinate:memoryless:and:finite}%
  Let $\StochG_{A,A'}$ be a stochastic game graph whose vertex snippet
  $\StochH_{A,A'}$ is almost-sure failing under fairness.
  Then, for all $v\in V^\StochG_\Refuter$($=V^{\StochH}_\Refuter$),
  %
  \begin{multline*}
    \inf_{\strat{\Refuter}\in\FairStrats{\Refuter,\StochG}}\sup_{\strat{\Verifier}\in\Strategies{\Verifier,\StochG}}\Expect{\strat{\Verifier}}{\strat{\Refuter}}_{\StochG,v}[\FMask]
    =
    \inf_{\strat{\Refuter}\in\DetMemorylessFairStrats{\Refuter,\StochH}}\sup_{\strat{\Verifier}\in\DetMemorylessStrats{\Verifier,\StochH}}\Expect{\strat{\Verifier}}{\strat{\Refuter}}_{\StochH,v}[\FMask]
    \\
    =
    \sup_{\strat{\Verifier}\in\DetMemorylessStrats{\Verifier,\StochH}}\inf_{\strat{\Refuter}\in\DetMemorylessFairStrats{\Refuter,\StochH}}\Expect{\strat{\Verifier}}{\strat{\Refuter}}_{\StochH,v}[\FMask]
    =
    \sup_{\strat{\Verifier}\in\Strategies{\Verifier,\StochG}}\inf_{\strat{\Refuter}\in\FairStrats{\Refuter,\StochG}}\Expect{\strat{\Verifier}}{\strat{\Refuter}}_{\StochG,v}[\FMask].
  \end{multline*}
\end{theorem}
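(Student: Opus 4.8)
The plan is to prove the displayed four-way equality by closing a cycle of inequalities through the finite vertex snippet $\StochH_{A,A'}$. Label the four expressions $(1),(2),(3),(4)$ in reading order, so that $(1)=\inf_{\strat{\Refuter}\in\FairStrats{\Refuter,\StochG}}\sup_{\strat{\Verifier}\in\Strategies{\Verifier,\StochG}}\Expect{\strat{\Verifier}}{\strat{\Refuter}}_{\StochG,v}[\FMask]$ is the infsup over the infinite game with unrestricted strategies, $(2)$ and $(3)$ are the infsup and supinf over the finite game $\StochH_{A,A'}$ restricted to deterministic memoryless (fair) strategies, and $(4)$ is the supinf over the infinite game. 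The max--min inequality gives $(4)\le(1)$ for free, since both quantities range over the common strategy spaces $\Strategies{\Verifier,\StochG}$ and $\FairStrats{\Refuter,\StochG}$. Hence it suffices to establish $(1)\le(2)$, $(2)=(3)$, and $(3)\le(4)$; the cycle $(1)\le(2)=(3)\le(4)\le(1)$ then forces all four quantities to coincide.

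For $(2)=(3)$ I would invoke~\cite{DBLP:conf/cav/CastroDDP22}. By hypothesis $\StochH_{A,A'}$ is finite and almost-sure failing under fairness, so it is a finite total-reward stochastic game that stops almost surely under fairness, and the determinacy result of~\cite{DBLP:conf/cav/CastroDDP22} applies directly: the game on $\StochH_{A,A'}$ is determined, its common value $\mathit{Val}$ equals both $(2)$ and $(3)$, and both players possess optimal strategies that are deterministic and memoryless, fair in the case of the Refuter. Fix such optimal strategies $\starredstrat{\Refuter}\in\DetMemorylessFairStrats{\Refuter,\StochH}$ and $\starredstrat{\Verifier}\in\DetMemorylessStrats{\Verifier,\StochH}$, so that $\sup_{\strat{\Verifier}\in\Strategies{\Verifier,\StochH}}\Expect{\strat{\Verifier}}{\starredstrat{\Refuter}}_{\StochH,v}[\FMask]=\mathit{Val}=\inf_{\strat{\Refuter}\in\FairStrats{\Refuter,\StochH}}\Expect{\starredstrat{\Verifier}}{\strat{\Refuter}}_{\StochH,v}[\FMask]$.

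To prove $(1)\le(2)$ I use $\starredstrat{\Refuter}$, which also belongs to $\DetMemorylessFairStrats{\Refuter,\StochG}\subseteq\FairStrats{\Refuter,\StochG}$ because $V^\StochG_\Refuter=V^\StochH_\Refuter$; in particular it is semi-Markov. Lemma~\ref{lm:semmimarkov2} lets me replace any Verifier strategy on $\StochG_{A,A'}$ by a semi-Markov one with the same payoff against $\starredstrat{\Refuter}$, and Lemma~\ref{lm:smstrat:to:smstrat:on:vertices} lets me further replace it by an extreme semi-Markov one, so $\sup_{\strat{\Verifier}\in\Strategies{\Verifier,\StochG}}\Expect{\strat{\Verifier}}{\starredstrat{\Refuter}}_{\StochG,v}[\FMask]=\sup_{\strat{\Verifier}\in\XSemiMarkovStrats{\Verifier,\StochG}}\Expect{\strat{\Verifier}}{\starredstrat{\Refuter}}_{\StochG,v}[\FMask]$. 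Proposition~\ref{prop:popurri}(\ref{prop:popurri:iii}) rewrites the right-hand side as $\sup_{\strat{\Verifier}\in\SemiMarkovStrats{\Verifier,\StochH}}\Expect{\strat{\Verifier}}{\starredstrat{\Refuter}}_{\StochH,v}[\FMask]$, which is at most $\sup_{\strat{\Verifier}\in\Strategies{\Verifier,\StochH}}\Expect{\strat{\Verifier}}{\starredstrat{\Refuter}}_{\StochH,v}[\FMask]=\mathit{Val}=(2)$ by optimality of $\starredstrat{\Refuter}$. Since $\starredstrat{\Refuter}$ is one admissible Refuter strategy in $(1)$, we get $(1)\le\sup_{\strat{\Verifier}\in\Strategies{\Verifier,\StochG}}\Expect{\strat{\Verifier}}{\starredstrat{\Refuter}}_{\StochG,v}[\FMask]\le(2)$.

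The argument for $(3)\le(4)$ is dual and uses $\starredstrat{\Verifier}\in\DetMemorylessStrats{\Verifier,\StochH}=\XDetMemorylessStrats{\Verifier,\StochG}\subseteq\Strategies{\Verifier,\StochG}$. By Proposition~\ref{prop:popurri}(\ref{prop:popurri:v}), $\inf_{\strat{\Refuter}\in\FairStrats{\Refuter,\StochG}}\Expect{\starredstrat{\Verifier}}{\strat{\Refuter}}_{\StochG,v}[\FMask]=\inf_{\strat{\Refuter}\in\FairStrats{\Refuter,\StochH}}\Expect{\starredstrat{\Verifier}}{\strat{\Refuter}}_{\StochH,v}[\FMask]=\mathit{Val}=(3)$, the middle equality again by optimality of $\starredstrat{\Verifier}$. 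Since $\starredstrat{\Verifier}$ is one admissible Verifier strategy in $(4)$, we obtain $(3)=\inf_{\strat{\Refuter}\in\FairStrats{\Refuter,\StochG}}\Expect{\starredstrat{\Verifier}}{\strat{\Refuter}}_{\StochG,v}[\FMask]\le(4)$, completing the cycle. The main obstacle is precisely the bridging steps $(1)\le(2)$ and $(3)\le(4)$: the uncountable set of couplings in $\StochG_{A,A'}$ must be collapsed to the finitely many polytope vertices retained in $\StochH_{A,A'}$ while preserving the expected payoff, and Proposition~\ref{prop:popurri} only licenses this collapse against a \emph{fixed deterministic memoryless} opponent. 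This is why the proof pivots on the optimal deterministic memoryless strategies $\starredstrat{\Refuter},\starredstrat{\Verifier}$ supplied by the determinacy of the finite snippet, rather than attempting to exchange the nested suprema and infima over the infinite game directly.
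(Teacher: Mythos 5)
Your proof is correct and follows essentially the same route as the paper's: both close the cycle $(1)\le(2)=(3)\le(4)\le(1)$ by using Lemma~\ref{lm:semmimarkov2} and Lemma~\ref{lm:smstrat:to:smstrat:on:vertices} to pass from arbitrary to extreme semi-Markov Verifier strategies, Proposition~\ref{prop:popurri} to transfer between $\StochG_{A,A'}$ and the finite snippet $\StochH_{A,A'}$, the determinacy results of \cite{DBLP:conf/cav/CastroDDP22} on the finite game, and the elementary $\sup\inf\le\inf\sup$ inequality. The only difference is presentational: you instantiate explicit optimal deterministic memoryless witnesses $\starredstrat{\Refuter},\starredstrat{\Verifier}$ (which exist because the restricted strategy sets are finite, and which are optimal against the wider strategy classes by the same cited results the paper invokes), whereas the paper manipulates the nested $\inf$/$\sup$ directly; both are sound.
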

\begin{proof}
  We first recall that the almost-sure failing under fairness property
  is equivalent to the stopping under fairness property
  in~\cite{DBLP:conf/cav/CastroDDP22}.  That is why we can safely
  apply the results from~\cite{DBLP:conf/cav/CastroDDP22} on
  $\StochH_{A,A'}$ in the calculations below.
  \begin{align*}
    \textstyle
    \inf_{\strat{\Refuter}\in\FairStrats{\Refuter,\StochG}}\sup_{\strat{\Verifier}\in\Strategies{\Verifier,\StochG}}\Expect{\strat{\Verifier}}{\strat{\Refuter}}_{\StochG,v}[\FMask]
    & \textstyle {} \leq
    \inf_{\strat{\Refuter}\in\DetMemorylessFairStrats{\Refuter,\StochG}}\sup_{\strat{\Verifier}\in\Strategies{\Verifier,\StochG}}\Expect{\strat{\Verifier}}{\strat{\Refuter}}_{\StochG,v}[\FMask]
    & \text{($\DetMemorylessFairStrats{\Refuter,\StochG}\subseteq\FairStrats{\Refuter,\StochG}$)} &  \tag{$\star$}\\
    & \textstyle {} =
    \inf_{\strat{\Refuter}\in\DetMemorylessFairStrats{\Refuter,\StochG}}\sup_{\strat{\Verifier}\in\SemiMarkovStrats{\Verifier,\StochG}}\Expect{\strat{\Verifier}}{\strat{\Refuter}}_{\StochG,v}[\FMask]
    & \text{(by Lemma~\ref{lm:semmimarkov2})} & \notag\\
    & \textstyle {} =
    \inf_{\strat{\Refuter}\in\DetMemorylessFairStrats{\Refuter,\StochG}}\sup_{\strat{\Verifier}\in\XSemiMarkovStrats{\Verifier,\StochG}}\Expect{\strat{\Verifier}}{\strat{\Refuter}}_{\StochG,v}[\FMask]
    & \text{(by Lemma~\ref{lm:smstrat:to:smstrat:on:vertices})} & \notag\\
    & \textstyle {} =
    \inf_{\strat{\Refuter}\in\DetMemorylessFairStrats{\Refuter,\StochH}}\sup_{\strat{\Verifier}\in\SemiMarkovStrats{\Verifier,\StochH}}\Expect{\strat{\Verifier}}{\strat{\Refuter}}_{\StochH,v}[\FMask]
    & \text{(by Prop.~\ref{prop:popurri}.\ref{prop:popurri:iii})} & \notag\\
    & \textstyle {} \leq
    \inf_{\strat{\Refuter}\in\DetMemorylessFairStrats{\Refuter,\StochH}}\sup_{\strat{\Verifier}\in\DetMemorylessStrats{\Verifier,\StochH}}\Expect{\strat{\Verifier}}{\strat{\Refuter}}_{\StochH,v}[\FMask]
    & \text{(by~\cite[Thm.~5]{DBLP:conf/cav/CastroDDP22})} & \tag{$\star$}\\
    & \textstyle {} =
    \sup_{\strat{\Verifier}\in\DetMemorylessStrats{\Verifier,\StochH}}\inf_{\strat{\Refuter}\in\DetMemorylessFairStrats{\Refuter,\StochH}}\Expect{\strat{\Verifier}}{\strat{\Refuter}}_{\StochH,v}[\FMask]
    & \text{(by~\cite[Thm.~5]{DBLP:conf/cav/CastroDDP22})} & \tag{$\star$}\\
    & \textstyle {} =
    \sup_{\strat{\Verifier}\in\DetMemorylessStrats{\Verifier,\StochH}}\inf_{\strat{\Refuter}\in\FairStrats{\Refuter,\StochH}}\Expect{\strat{\Verifier}}{\strat{\Refuter}}_{\StochH,v}[\FMask]
    & \text{(by~\cite[Lemma~6]{DBLP:conf/cav/CastroDDP22})} & \notag\\
    & \textstyle {} =
    \sup_{\strat{\Verifier}\in\XDetMemorylessStrats{\Verifier,\StochG}}\inf_{\strat{\Refuter}\in\FairStrats{\Refuter,\StochG}}\Expect{\strat{\Verifier}}{\strat{\Refuter}}_{\StochG,v}[\FMask]
    & \text{(by Prop.~\ref{prop:popurri}.\ref{prop:popurri:v})} & \notag\\
    & \textstyle {} \leq
    \sup_{\strat{\Verifier}\in\Strategies{\Verifier,\StochG}}\inf_{\strat{\Refuter}\in\FairStrats{\Refuter,\StochG}}\Expect{\strat{\Verifier}}{\strat{\Refuter}}_{\StochG,v}[\FMask]
    & \text{($\XDetMemorylessStrats{\Verifier,\StochG}\subseteq\Strategies{\Verifier,\StochG}$)} & \tag{$\star$}\\
    & \textstyle {} \leq
    \inf_{\strat{\Refuter}\in\FairStrats{\Refuter,\StochG}}\sup_{\strat{\Verifier}\in\Strategies{\Verifier,\StochG}}\Expect{\strat{\Verifier}}{\strat{\Refuter}}_{\StochG,v}[\FMask]
    & \text{(prop. $\inf$/$\sup$)} & \notag
  \end{align*}
  Formulas marked with ($\star$) are those in the statement
  of the theorem and, because the first and last formulas
  are the same, all of them are equal.
\end{proof}

Theorem~\ref{theorem:determinate:memoryless:and:finite} guarantees
that the stochastic masking game can be solved through its finite
vertex snippet using the algorithm proposed in~\cite{DBLP:conf/cav/CastroDDP22}.
The next theorem uses this fact to provide a set of
Bellman equations based on the symbolic game graph whose greatest
fixpoint solution is the solution of the original stochastic masking
game.

\begin{theorem}\label{thm:algoritmic:solution:of:G}%
  Let $\StochG_{A,A'}$ be a stochastic masking game graph whose vertex
  snippet is almost-sure failing under fairness and let $\milestones$
  be a milestone for $A'$.
  Let $\SymbG_{A,A'}$ be the corresponding symbolic game graph.
  Let $\nu\Bellman$ be the greatest fixpoint of the functional
  $\Bellman$ defined, for all $v\in V^\SymbG$, as follows:{\small%
  \[
  \Bellman(f)(v) =
  \begin{cases}
    \min \big(\upperbound,  \max_{w \in \vertices{\couplings{\pr{3}{v}}{\pr{4}{v}}}} \sum_{v' \in \post(v)} w(\pr{0}{v'},\pr{2}{v'})  f(v') \big)
    & \text{if $v \in V^{\SymbG}_\Probabilistic$}
    \\
    \min \left( \upperbound, \mreward^\SymbG(v) + \max \left\{f(v') \mid v' \in \post(v) \right\} \right)
    & \text{if $v \in  V^{\SymbG}_\Verifier$}
    \\
    \min \left( \upperbound,  \min \left\{f(v') \mid v' \in \post(v) \right) \right\}
    & \text{if $v \in {V^{\SymbG}_\Refuter{\setminus}\{\ErrorSt\}}$}
    \\
    0
    & \text{if $v=\ErrorSt$}
  \end{cases}
  \]}%
  %
  where 
  $\pr{i}{v}$ is the $i$-th coordinate of $v$ ($i\geq 0$),
  $\mreward^\SymbG(\pr{0}{v},\pr{1}{v},\pr{2}{v},\pr{3}{v},\pr{4}{v},\pr{6}{v}) = \mreward^\StochG(v)$
  for every $v\in V^\StochG$, and
  $\upperbound\in\Reals$ such that 
  $\upperbound \geq \inf_{\strat{\Refuter} \in \DetMemorylessFairStrats{\Refuter}} \sup_{\strat{\Verifier} \in \DetMemorylessStrats{\Verifier}} \Expect{\strat{\Verifier}}{\strat{\Refuter}}_{\StochG_{A,A'}, v}[\FMask]$,
  for every $v\in V^\StochG$.
  Then, the value of the game $\StochG_{A,A'}$ at its initial state is
  equal to $\nu\Bellman(\InitVertexSG)$.
\end{theorem}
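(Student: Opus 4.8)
The plan is to chain together the reduction already provided by Theorem~\ref{theorem:determinate:memoryless:and:finite}, the finite-game algorithmics of~\cite{DBLP:conf/cav/CastroDDP22}, and a transport of the resulting fixpoint characterisation from the vertex snippet $\StochH_{A,A'}$ onto the symbolic graph $\SymbG_{A,A'}$. First I would invoke Theorem~\ref{theorem:determinate:memoryless:and:finite} to equate the value of $\StochG_{A,A'}$ at $v$ with $\sup_{\strat{\Verifier}\in\DetMemorylessStrats{\Verifier,\StochH}}\inf_{\strat{\Refuter}\in\DetMemorylessFairStrats{\Refuter,\StochH}}\Expect{\strat{\Verifier}}{\strat{\Refuter}}_{\StochH,v}[\FMask]$. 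Since $\StochH_{A,A'}$ is finite and almost-sure failing under fairness (equivalently, stopping under fairness), this determined value is characterised by~\cite{DBLP:conf/cav/CastroDDP22} as the greatest fixpoint $\nu\BellmanH$ of the capped Bellman operator $\BellmanH$ on functions $V^\StochH\to\Reals$, where $\BellmanH$ acts as $\mreward^\StochG(v)+\max$ over successors at Verifier vertices, as $\min$ over successors at non-error Refuter vertices, as $\sum_{v'}w(\pr{0}{v'},\pr{2}{v'})\,f(v')$ at a probabilistic vertex carrying the fixed vertex-coupling $w$, and as $0$ at $\ErrorSt$, each capped from above by $\upperbound$.

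The core of the argument is then a correspondence between $\BellmanH$ on $\StochH_{A,A'}$ and $\Bellman$ on $\SymbG_{A,A'}$. The two graphs differ only in the placement of the Verifier's coupling choice: in $\StochH_{A,A'}$ a Verifier vertex $(s,\sigma^1,s',\mu,\mhyphen,\mhyphen,\Verifier)$ branches to the coupling-indexed probabilistic vertices $(s,\mhyphen,s',\mu,\mu',w,\Probabilistic)$ with $w\in\vertices{\couplings{\mu}{\mu'}}$, so the Verifier selects both the matching distribution $\mu'$ and the vertex-coupling $w$; in $\SymbG_{A,A'}$ the same Verifier vertex branches only on $\mu'$ to a single symbolic probabilistic vertex $(s,\mhyphen,s',\mu,\mu',\Probabilistic)$, and the choice of $w$ is folded into the symbolic probabilistic case of $\Bellman$ as the explicit $\max_{w\in\vertices{\couplings{\pr{3}{v}}{\pr{4}{v}}}}$. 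I would make this precise through the projection $\pparenthesis{\cdot}$ that deletes the coupling coordinate, associating to each symbolic probabilistic vertex the finite family of snippet vertices obtained by instantiating all vertex-couplings.

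Concretely, I would show that $\nu\Bellman$ and $\nu\BellmanH$ determine each other. From $g=\nu\Bellman$ define $\hat f:V^\StochH\to\Reals$ by $\hat f(v)=g(\pparenthesis{v})$ on Refuter and Verifier vertices and $\hat f((s,\mhyphen,s',\mu,\mu',w,\Probabilistic))=\min(\upperbound,\sum_{v'}w(\pr{0}{v'},\pr{2}{v'})\,g(\pparenthesis{v'}))$ on probabilistic vertices, and verify that $\hat f$ is a fixpoint of $\BellmanH$. The only nontrivial equation is the Verifier one, where $\max$ over snippet successors equals $\max_{\mu'}\max_{w}$; since $\min(\upperbound,\cdot)$ commutes with a finite $\max$, this reorders to $\max_{\mu'}$ of the symbolic probabilistic value, matching the single symbolic successor, and the inner $\upperbound$-cap is inactive because $\upperbound$ bounds the value at every vertex. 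Symmetrically, from $h=\nu\BellmanH$ define $\tilde g$ on $V^\SymbG$ by $\tilde g=h$ on Refuter and Verifier vertices (which correspond one-to-one across the graphs) and $\tilde g((s,\mhyphen,s',\mu,\mu',\Probabilistic))=\max_{w}h((s,\mhyphen,s',\mu,\mu',w,\Probabilistic))$, and check it is a $\Bellman$-fixpoint. Both operators are monotone on complete lattices of $\upperbound$-bounded functions, so Knaster--Tarski yields unique greatest fixpoints; the lift $\hat f$ gives $g(\pparenthesis{v})\le\nu\BellmanH(v)$ while the projection $\tilde g$ gives $\nu\BellmanH(v)\le g(\pparenthesis{v})$ on Refuter and Verifier vertices, forcing $\nu\Bellman(\pparenthesis{v})=\nu\BellmanH(v)$. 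Evaluating at the initial vertex, $\nu\Bellman(\InitVertexSG)=\nu\BellmanH(\InitVertexH)$, which equals the value of $\StochG_{A,A'}$ by the reduction above.

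The step I expect to be the main obstacle is precisely this fixpoint transport, and in particular establishing it for the \emph{greatest} fixpoint rather than an arbitrary one. Because $\BellmanH$ and $\Bellman$ live over different domains ($V^\StochH$ versus $V^\SymbG$) and total-reward operators generally admit many fixpoints, I must prove both inequalities between the transported fixpoints and close the loop by maximality, while tracking the $\upperbound$-cap so that it stays inactive on the relevant values. The commutation of the two maximisations, $\max_{\mu'}$ over matching transitions and $\max_{w}$ over vertex-couplings, is the conceptual heart that legitimises folding the Verifier's coupling choice into the symbolic probabilistic equation; verifying it, together with the redundancy of the cap at probabilistic vertices, constitutes the bulk of the technical work.
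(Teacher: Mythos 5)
Your proposal is correct and follows essentially the route the paper takes: reduce to the finite vertex snippet via Theorem~\ref{theorem:determinate:memoryless:and:finite}, invoke the greatest-fixpoint/Bellman characterisation of~\cite{DBLP:conf/cav/CastroDDP22} on $\StochH_{A,A'}$, and transport it to $\SymbG_{A,A'}$ by folding the Verifier's vertex-coupling choice into the $\max_{w\in\vertices{\couplings{\pr{3}{v}}{\pr{4}{v}}}}$ of the symbolic probabilistic equation. Your care with the greatest-fixpoint transport in both directions and with the commutation of $\min(\upperbound,\cdot)$ past the finite maxima matches what the argument requires.
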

Constant $\upperbound$ is an upper bound needed so Knaster-Tarski applies on the
complete lattice $[0,\upperbound]^V$~\cite{DBLP:conf/cav/CastroDDP22}.

Notice that Theorems~\ref{theorem:determinate:memoryless:and:finite}
and~\ref{thm:algoritmic:solution:of:G} only require $\StochH_{A,A'}$
to be almost-sure failing under fairness, and if $\StochG_{A,A'}$ is
almost-sure failing under fairness, necessarily so is
$\StochH_{A,A'}$, which makes the theorems stronger.
Nonetheless, one would expect that also if $\StochH_{A,A'}$ is
almost-sure failing under fairness, so is $\StochG_{A,A'}$.
That is, we would like that
$\inf_{\strat{\Verifier}\in\Strategies{\Verifier},\strat{\Refuter}\in\FairStrats{\Refuter}}\Prob{\strat{\Verifier}}{\strat{\Refuter}}_{\StochG,\InitVertex}(\Diamond\ErrorSt)=1$
if and only if
$\inf_{\strat{\Verifier}\in\Strategies{\Verifier},\strat{\Refuter}\in\FairStrats{\Refuter}}\Prob{\strat{\Verifier}}{\strat{\Refuter}}_{\StochH,\InitVertexH}(\Diamond\ErrorSt)=1$.
Unfortunately we were not able to prove this equivalence, and the most
we know (thanks to variants of Lemmas~\ref{lm:semmimarkov2}
and~\ref{lm:smstrat:to:smstrat:on:vertices}) is that
\sloppy $\inf_{\strat{\Verifier}\in\Strategies{\Verifier},\strat{\Refuter}\in\FairStrats{\Refuter}}\Prob{\strat{\Verifier}}{\strat{\Refuter}}_{\StochH,\InitVertexH}(\Diamond\ErrorSt)=1$
implies both
$\inf_{\strat{\Verifier}\in\Strategies{\Verifier},\strat{\Refuter}\in\SemiMarkovFairStrats{\Refuter}}\Prob{\strat{\Verifier}}{\strat{\Refuter}}_{\StochG,\InitVertex}(\Diamond\ErrorSt)=1$
and
$\inf_{\strat{\Verifier}\in\SemiMarkovStrats{\Verifier},\strat{\Refuter}\in\FairStrats{\Refuter}}\Prob{\strat{\Verifier}}{\strat{\Refuter}}_{\StochG,\InitVertex}(\Diamond\ErrorSt)=1$,
that is, at least one of the set of strategies needs to be restricted
to the semi-Markov ones. 

Since $\StochH_{A,A'}$ is finite, it can be checked whether it is
almost-sure failing under fairness by using directly the algorithm
proposed in~\cite[Theorem~3]{DBLP:conf/cav/CastroDDP22}.  However, we
could alternatively check it avoiding the explosion introduced by the
vertex couplings through the symbolic game graph.
Thus,  we define the predecessor sets in $\SymbG_{A,A'}$ for a given set
$C$ of symbolic vertices, as follows:
{\small
\begin{align*}
  \SymbEFairpre(C) = {}
  & \{ v \in V^\SymbG \mid {\post(v)\cap C \neq \emptyset} \}\\
  \SymbAFairpre(C) = {}
  & \{ v \in V^\SymbG_\Verifier \mid {\post(v)\subseteq C} \}
    \cup \{ v \in V^\SymbG_\Refuter \mid {\post(v)\cap C \neq \emptyset} \} \\
  & \cup \{ v \in V^\SymbG_\Probabilistic \mid \Eq(v,C) \text{ has no solution }\}
\end{align*}}%
$\SymbEFairpre(C)$ collects all vertices $v$ for which there is a
coupling that leads to a vertex $v'$ in $C$, and do so by simply using
the edge $E^\SymbG$ (through $\post$) even for the probabilistic
vertices.
The definition of $\SymbAFairpre(C)$ is more assorted.
The first set collects all the Verifier vertices $v$ that inevitably
lead to $C$.
The second set collects all Refuter vertices $v$ that leads to some
state in $C$ (since the Refuter is fair, any successor of $v$ will
eventually be taken).
The last set collects all probabilistic vertices $v$ for which there
is no coupling ``avoiding'' $C$.  This is encoded by checking that
$\Eq(v,C)$ cannot be solved, since a coupling solving $\Eq(v,C)$
defines a probabilistic transition that avoids $C$ with probability 1.


The next theorem provides an algorithm to check whether a vertex
snippet is almost-sure failing under fairness using $\SymbEFairpre$
and $\SymbAFairpre$.

\begin{theorem}\label{theo:decide-stopping}%
  The vertex snippet $\StochH_{A,A'}$ of the stochastic masking game 
  $\StochG_{A,A'}$ is almost-sure failing under fairness if and only if
  $\InitVertexSG \in V^\SymbG \setminus {\SymbEFairpre}^*(V^\SymbG \setminus {\SymbAFairpre}^*(\{ \ErrorSt \}))$,
  where $\InitVertexSG$ is the initial state of $\SymbG_{A,A'}$ (the
  symbolic version of $\StochG_{A,A'}$) and $V^\SymbG$ is the sets of
  vertices of $\SymbG_{A,A'}$.
\end{theorem}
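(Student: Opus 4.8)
The plan is to reduce the statement to the finite-game characterization of almost-sure reachability under fairness established in~\cite[Thm.~3]{DBLP:conf/cav/CastroDDP22}, applied to the \emph{finite} vertex snippet $\StochH_{A,A'}$, and then to transfer that characterization from $\StochH_{A,A'}$ to the symbolic graph $\SymbG_{A,A'}$. The key observation is that the concrete fixpoint characterization takes the same alternating form as the one in the statement: $\StochH_{A,A'}$ is almost-sure failing under fairness exactly when $\InitVertexH\in V^\StochH\setminus{\EFairpreH}^*(V^\StochH\setminus{\AFairpreH}^*(\{\ErrorSt\}))$, where ${\AFairpreH}^*$ and ${\EFairpreH}^*$ are the closures of the concrete fair universal and existential predecessor operators on $\StochH_{A,A'}$. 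The whole work therefore amounts to showing that the symbolic operators $\SymbAFairpre$ and $\SymbEFairpre$ on $\SymbG_{A,A'}$ compute precisely $\AFairpreH$ and $\EFairpreH$ on $\StochH_{A,A'}$, up to the projection $\pparenthesis{\cdot}$ that forgets the coupling coordinate.

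First I would set up the correspondence between symbolic and concrete vertices. On Refuter and Verifier vertices the projection $\pparenthesis{\cdot}$ is a bijection, since these vertices coincide in $\StochH_{A,A'}$ and $\SymbG_{A,A'}$ except for the unused coupling slot; in particular $\InitVertexSG=\pparenthesis{\InitVertexH}$. A symbolic probabilistic vertex $(s,\mhyphen,s',\mu,\mu',\Probabilistic)$ instead lifts to the whole family $\{(s,\mhyphen,s',\mu,\mu',w,\Probabilistic)\mid w\in\vertices{\couplings{\mu}{\mu'}}\}$ of concrete snippet vertices. For the universal phase I use the \emph{universal} lifting (a symbolic probabilistic vertex lies in a set iff \emph{every} member of its family does); for the existential phase I use the \emph{existential} lifting (iff \emph{some} member does). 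The key polytope facts are: the feasible region of $\Eq(v,C)$ is the face of $\couplings{\mu}{\mu'}$ obtained by forcing the $C$-coordinates to $0$, so it is nonempty iff it contains a vertex of $\couplings{\mu}{\mu'}$; hence $\Eq(v,C)$ is solvable iff some vertex coupling avoids $C$, and dually $\Eq(v,C)$ has \emph{no} solution iff \emph{every} vertex coupling assigns positive probability to $C$. Moreover the independent coupling $w(t,t')=\mu(t)\mu'(t')$ witnesses that each support pair receives positive weight from some coupling, hence from some vertex coupling, so $\post^\SymbG(v)\cap C\neq\emptyset$ iff some vertex coupling of $v$ reaches $C$.

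Then I would prove, by induction on the fixpoint iteration, that ${\SymbAFairpre}^*(\{\ErrorSt\})$ and ${\AFairpreH}^*(\{\ErrorSt\})$ correspond under the universal lifting. The Refuter and Verifier cases follow from the bijection together with the equivalences $\post^\SymbG(v)\cap C\neq\emptyset\Leftrightarrow\post^\StochH(v)\cap\hat C\neq\emptyset$ and $\post^\SymbG(v)\subseteq C\Leftrightarrow\post^\StochH(v)\subseteq\hat C$, which hold because the concrete successors of a Verifier vertex are exactly the lifts of its symbolic probabilistic successors; the probabilistic case is precisely the ``$\Eq(v,C)$ unsolvable iff all vertex couplings hit $C$'' fact above. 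Complementing this correspondence turns the universal lifting into the existential lifting on $V^\SymbG\setminus{\SymbAFairpre}^*(\{\ErrorSt\})$, which is exactly the set fed to the existential phase. A second induction then shows that ${\SymbEFairpre}^*$ preserves the existential lifting, using $\post^\SymbG(v)\cap C\neq\emptyset\Leftrightarrow$ some vertex coupling reaches $\hat C$ for probabilistic vertices and the bijection elsewhere. Taking the final complement and evaluating at the initial (Refuter) vertex, where the correspondence is bijective, yields $\InitVertexSG\in V^\SymbG\setminus{\SymbEFairpre}^*(V^\SymbG\setminus{\SymbAFairpre}^*(\{\ErrorSt\}))$ iff $\InitVertexH\in V^\StochH\setminus{\EFairpreH}^*(V^\StochH\setminus{\AFairpreH}^*(\{\ErrorSt\}))$, which by~\cite[Thm.~3]{DBLP:conf/cav/CastroDDP22} is equivalent to $\StochH_{A,A'}$ being almost-sure failing under fairness.

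The main obstacle is the treatment of the probabilistic vertices and, above all, the interaction of the two liftings across the complementation between phases: I must make sure that quantifying over \emph{all} couplings (as $\Eq$ implicitly does) agrees with quantifying over the finitely many vertex couplings that actually appear in $\StochH_{A,A'}$. This is exactly where the face and convex-combination structure of the two-way transport polytope is used, together with the observation that a face is nonempty iff it contains a polytope vertex. A secondary point requiring care is verifying that the quantifier alternation along the layers Refuter~$\to$~Verifier~$\to$~Probabilistic~$\to$~Refuter is matched step by step in both representations, so that the inductive correspondence is genuinely preserved at each application of the predecessor operators.
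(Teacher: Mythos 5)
Your proposal is correct and follows essentially the same route the paper takes (and only sketches): apply the finite-game characterization of \cite[Theorem~3]{DBLP:conf/cav/CastroDDP22} to the vertex snippet $\StochH_{A,A'}$ and show that $\SymbAFairpre$ and $\SymbEFairpre$ mirror the concrete predecessor operators, with the face/vertex structure of the coupling polytope justifying that ``$\Eq(v,C)$ has no solution'' captures exactly the universal quantification over vertex couplings. Your explicit treatment of the universal versus existential liftings across the complementation between the two fixpoints is precisely the bookkeeping the paper leaves implicit.
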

As $\Eq(v,C)$ can be computed in polynomial time, so do
$\SymbEFairpre(C)$ and $\SymbAFairpre(C)$.  As a consequence, the
problem of deciding whether a vertex snippet $\StochH_{A,A'}$ is
almost-sure failing under fairness is polynomial on the sizes of $A$
and $A'$.




\section{Related Work}\label{sec:related-work}

Since our metric is a bisimulation-based notion aimed at quantifying
how robust a masking fault tolerant algorithm is, the idea of
approximate bisimulation immediately shows up.
In this category it is worth mentioning 
$\epsilon$-bisimulations~\cite{DBLP:conf/ifip2/GiacaloneJS90,DBLP:conf/lics/DesharnaisJGP02},
in which related  states that imitate each other do not differ more than an
$\epsilon\in[0,1]$ on the probabilistic value.  Therefore
$\epsilon$-bisimulations are not able to accumulate the difference produced in each step.  So,  these  relations cannot measure to what extent faults can be
tolerated over time.
The principle of (1-bounded) bisimulation
metrics~\cite{DBLP:conf/qest/DesharnaisLT08,DBLP:conf/birthday/BreugelW14}
is different as they aim to quantify the similarity of whole models
rather than single steps.  Nonetheless, if the models inevitably differ
(as it is the case of  almost-sure failing systems) the
metric always equals 1 (maximum difference), which again cannot
measure how long faults are tolerated.
Instead, bisimulation metrics with
discount~\cite{DBLP:conf/qest/DesharnaisLT08,DBLP:conf/birthday/BreugelW14}
do give an idea of robustness since the discount factor inversely
weights how distant in a trace the difference between the models is
eventually witnessed.  However, these metrics only provide a relative
value (smaller values mean more robust) and cannot focus on particular
events as our metric does.
In any case, all these notions have been characterized by games
which served as a base for algorithmic
solutions~\cite{DBLP:conf/lics/DesharnaisJGP02,DBLP:conf/birthday/BreugelW14,BacciBLMTB19,DBLP:journals/jcss/TangB20}.
%
In~\cite{DBLP:conf/lics/DesharnaisJGP02} a \emph{non-stochastic} game
for $\epsilon$-bisimulation is provided where each round is divided in
five steps in which both Refuter and Verifier alternate twice.
Therein the difference is quantified independently in each step, so it
is easy to avoid the use of couplings.
%
Instead, the stochastic games for bisimulation
metrics~\cite{DBLP:conf/birthday/BreugelW14} are very much similar to
ours with the difference that the Verifier only chooses a vertex
coupling instead of any possible coupling as we do here, and it
considers only deterministic memoryless strategies.

In \cite{LanotteMT17} a weak simulation quasimetric is introduced and used to reason
about the evolution of \emph{gossip protocols} to compare
protocols with similar behavior up to a certain tolerance.
Though its purpose is close to ours,  the quasimetric suffers the
same problem as bisimulation metrics returning $1$ 
when comparing protocols with almost-sure failing implementations.

Metrics like \emph{Mean-Time To Failure} (MTTF)~\cite{ReliabilityBook} are normally used.
However, our framework is more general than such metrics since it is not limited to count time
 units  as other events may be set as milestones.  In addition, the computation of MTTF would normally require the identification of
 failure states in an ad hoc manner while we do this at a higher
 level of abstraction.

\section{Concluding remarks} \label{sec:finalDiscussions}

We presented a relation of masking fault-tolerance between
probabilistic transition systems and a corresponding stochastic game
characterization.  As the game could be infinite, we proposed an
alternative finite symbolic representation by means of which the game
can be solved in polynomial time.
We extended the game with quantitative objectives based on collecting
``milestones'' thus providing a way to quantify how good an
implementation is for masking faults.
%
We proved that the resulting game is determined and can be computed by
solving a collection of functional equations.  We also provided a
polynomial technique to decide whether a game is almost-sure failing
under fairness.
In this article we focused on the theoretical contribution.  We leave
as further work the description of the implementation of this idea.

Though it does not affect our result of determinacy nor the
algorithmic solution proposed here, it remains open to show whether it
holds that whenever the vertex snippet is almost-sure failing under
fairness so is the general stochastic masking game.
%
Also, notice that our solution is based on a strong version of
bisimulation.  A characterization based on probabilistic weak
bisimulation would facilitate the application of our approach to
complex systems.
\bibliographystyle{eptcs}
\bibliography{references-dblp}


\end{document}